\documentclass[journal]{IEEEtran}
\usepackage{amsmath,amssymb,amsfonts}
\usepackage{amsthm}
\usepackage{calligra}
\usepackage{algorithmicx}
\usepackage[linesnumbered,ruled,vlined]{algorithm2e}
\usepackage{bm}
\usepackage{cite}
\usepackage{hyperref}
\usepackage{enumerate}
\usepackage[table]{xcolor}
\usepackage{graphicx}
\usepackage{epstopdf}
\usepackage{cite,url,epsfig}
\usepackage{epstopdf}
\usepackage{caption}
\usepackage{comment}
\usepackage{booktabs}
\usepackage{subfigure}
\usepackage{booktabs}
\usepackage{xcolor}
\usepackage{colortbl}

\newtheorem{definition}{Definition}

\newtheorem{lemma}{Lemma}
\newtheorem{problem}{Problem}
\newtheorem{remark}{Remark}

\begin{document}
	\title{Regularized Diffusion-based Contract Model for Covert Semantic Entropy Control in LAENets}
	\author{
		Yansheng Liu, Jinbo Wen, Kun Zhu*, \textit{Member IEEE}, Yang Zhang, Jiawen Kang, \textit{Senior Member IEEE}
		
		\thanks{ 
			Y. Liu, J. Wen, K. Zhu, and Y. Zhang are with the College of Computer Science and Technology, Nanjing University of Aeronautics and Astronautics, Nanjing 210016, China (e-mails: yansheng@nuaa.edu.cn; jinbo1608@nuaa.edu.cn; zhukun@nuaa.edu.cn; yangzhang@nuaa.edu.cn).
			
			J. Kang is with the School of Automation, Guangdong University of Technology, Guangzhou 510006, China (e-mail: kavinkang@gdut.edu.cn).
			
			
			\textit{*Corresponding author: Kun Zhu}
			
		} 
	}
	
	\maketitle
	
	\begin{abstract}
		Low-Altitude Economy Networks (LAENets) have emerged as a critical communication paradigm for operation-critical and regulation-aware applications, where Unmanned Aerial Vehicles (UAVs) transmit task-related information under stringent low-probability-of-detection constraints. These constraints severely limit the available transmission power and bandwidth, rendering conventional bit-level communication inefficient when task performance depends on high-level semantic understanding rather than raw data fidelity.
		Fortunately, Semantic Communication (SemCom) can be a promising solution by prioritizing task-relevant information over bit-level accuracy. 
		However, different levels of semantic abstraction inherently introduce different degrees of information loss and redundancy, which may either compromise task reliability or incur excessive transmission overhead if not properly controlled.
		To this end, we propose an incentive-aware semantic entropy control framework for covert communications in LAENets. Specifically, we regulate semantic uncertainty at the receiver by adjusting the semantic abstraction level at the UAV side, thereby enabling reliable task information delivery under extreme covert constraints.
		Since the Base Station (BS) cannot directly observe the semantic processing capabilities and abstraction-dependent transmission costs of UAVs, information asymmetry naturally arises in SemCom service provision.
		Accordingly, we propose a contract theoretic model, where we adopt Prospect Theory (PT) to capture the subjective utility of the BS toward personalized semantic services.
		Furthermore, we design a Regularized Diffusion-based Soft Actor-Critic (RDSAC) algorithm for optimal contract design under PT.
		This algorithm enhances contract design by introducing diffusion entropy regularization together with action entropy regularization.
		Numerical results show the effectiveness of the proposed framework and demonstrate that RDSAC achieves average reward improvements of 3.41\% over the SAC algorithm and 31.44\% over the proximal policy optimization algorithm.
	\end{abstract} 
	\begin{IEEEkeywords}
		LAENets, SemCom, convert communication, semantic entropy, contract theory, PT, diffusion-based deep reinforcement learning.
	\end{IEEEkeywords}
	\IEEEpeerreviewmaketitle
	
	\section{Introduction}
	\begin{figure}[t]
		\centering
		\vspace{-0.8em}
		\captionsetup{font=footnotesize}
		\includegraphics[width=0.8125\linewidth]{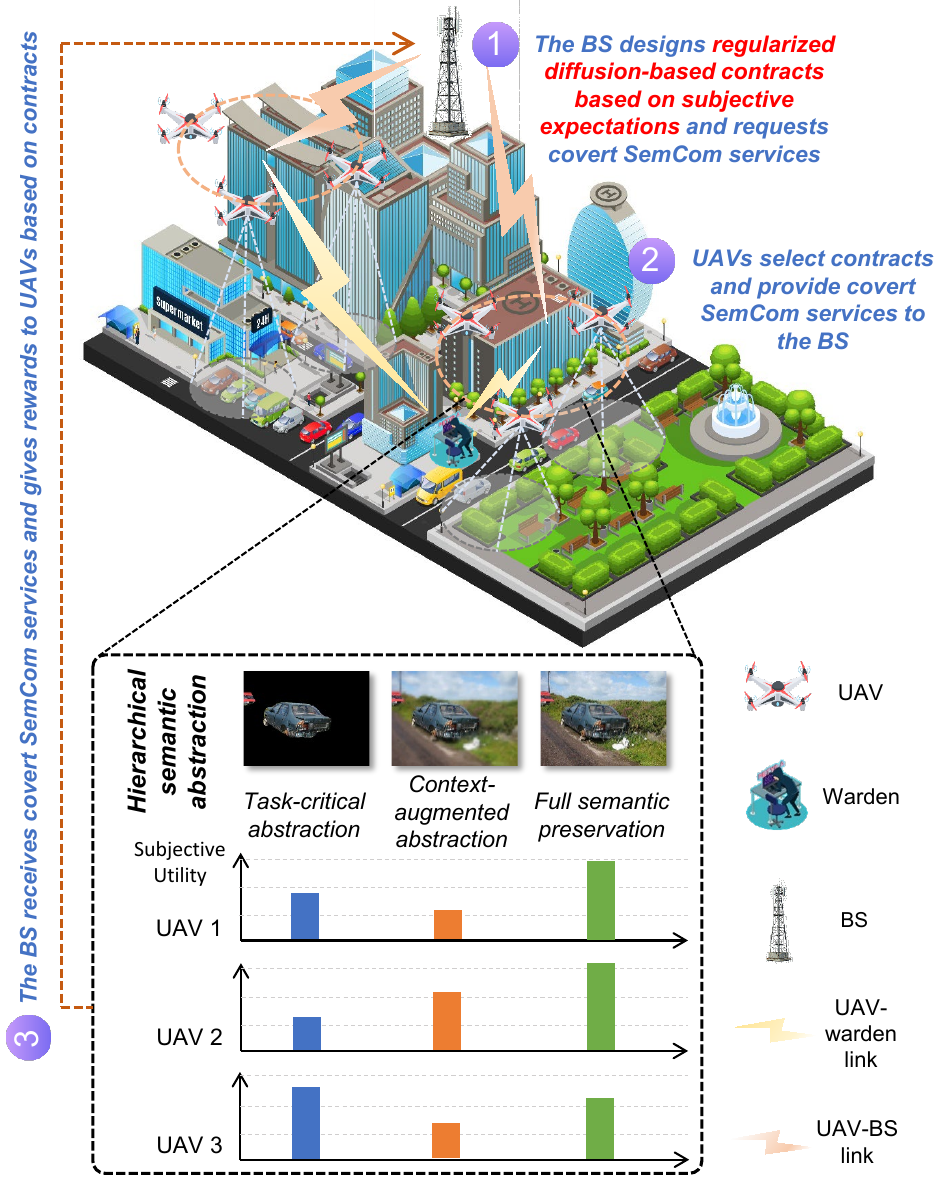}
		\caption{Incentive-aware semantic entropy control framework in LAENets. The BS generates regularized diffusion-based contracts according to subjective utility preferences and covert semantic communication requirements. UAVs select contracts and adjust their transmission strategies accordingly. By adopting hierarchical semantic abstraction levels, UAVs exhibit heterogeneous subjective utilities.}
		\label{system}
	\end{figure}
	Recently, Low-Altitude Economy Networks (LAENets) have attracted increasing attention as a critical communication paradigm for operation-critical and regulation-aware low-altitude applications such as urban sensing, infrastructure inspection, and emergency response \cite{11271498, 11270843}. In these networks, Unmanned Aerial Vehicles (UAVs) act as aerial sensing and communication agents that transmit task-related information to legitimate receivers, and in certain regulation-constrained scenarios, operate under covert communication requirements to remain undetectable to adversarial monitors \cite{9382022,11270843}. In such scenarios, the covert communication constraints fundamentally differentiate LAENets from conventional UAV communication systems. To satisfy low-probability-of-detection constraints, UAV transmissions in LAENets must operate under stringent power and spectral limitations \cite{11059486,10447237}, which substantially reduces the effective communication capacity \cite{11271498}. Consequently, such severely constrained communication resources are insufficient to support UAV-enabled tasks with increasing semantic complexity, including scene understanding, target recognition, and cooperative decision-making \cite{11059486}. Conventional physical-layer covert communication techniques primarily focus on power control, coding, and detection avoidance \cite{11059486,11271498}, but they treat transmitted information as bit streams and remain largely agnostic to task semantics \cite{11059486}. This limitation makes them inefficient in scenarios where system performance is determined by high-level task understanding rather than raw data fidelity \cite{11098629}.
	
	Fortunately, Semantic Communication (SemCom) has recently emerged as a promising paradigm that prioritizes the transmission of task-relevant information over bit-level accuracy \cite{11098629}. This paradigm is particularly appealing for LAENets with covert communication constraints, as it enables task execution under extremely constrained communication budgets. Instead of transmitting raw sensory data, SemCom allows UAVs to convey abstracted and task-relevant information that directly supports downstream tasks such as perception, recognition, and decision-making \cite{11098629,11059486}. The effectiveness of SemCom in such networks critically depends on how much task-relevant uncertainty is introduced by semantic abstraction and wireless transmission \cite{10447237}. For example, semantic abstraction inevitably discards part of the original information, leading to uncertainty in the conveyed task semantics, while stochastic channels further amplify this effect \cite{10912460}. In this sense, semantic entropy \cite{10854543} serves as a bridge between semantic abstraction and task-level performance. Specifically, excessively low semantic entropy, corresponding to highly compressed and coarse semantic representations, leads to insufficient semantic reliability and degrades task execution. In contrast, high semantic entropy, induced by fine-grained semantic representations with rich semantic details, incurs higher transmission cost and elevates detectability risk. 
	
	Although SemCom has demonstrated effectiveness in bandwidth-limited or noisy channels, semantic entropy control in LAENets with covert communication constraints poses fundamental challenges. Existing studies predominantly optimize robustness or task accuracy with fixed or centrally designed semantic abstraction strategies \cite{11059486, 10912460, 10447237}. Such formulations overlook the intrinsic uncertainty introduced by semantic abstraction and its interaction with stringent covert communication constraints. 
	From a system-level perspective, existing approaches typically assume that semantic extraction and transmission strategies can be centrally designed or directly enforced \cite{9685667,10447237}. Such an assumption presumes that the Base Station (BS) has accurate knowledge of the operational and semantic characteristics of UAVs. 
	However, in practical LAENets operating under covert constraints, the BS cannot precisely observe the heterogeneous private characteristics of UAVs, including sensing quality, energy availability, onboard computational resources, and the cost of semantic abstraction \cite{10841438}, which leads to pronounced information asymmetry. Under this information asymmetry, semantic entropy is an abstract and model-dependent quantity that cannot be directly observed or verified by the BS, thereby making centralized semantic entropy control nontrivial \cite{10854543}.
	
	To address the above challenges, in this paper, we develop an incentive-aware semantic entropy control framework for LAENets operating under covert communication constraints, where semantic entropy is controlled through task-oriented semantic abstraction decisions \cite{10854543}. Instead of treating semantic entropy as an explicitly measurable quantity, we characterize it as an implicit property induced by different abstraction levels applied to raw sensory observations, which determine the uncertainty and richness of task-relevant semantics conveyed over covert channels \cite{10854543}. Since semantic entropy cannot be directly observed or verified by the BS, we introduce a task-oriented metric termed \textit{covert semantic information density} to capture how variations in semantic abstraction propagate to downstream decision-making performance under unreliable channels. 
	Under information asymmetry, where UAV-specific capabilities and costs are privately known \cite{10841438}, we employ contract theory to regulate semantic entropy through verifiable semantic abstraction levels \cite{10638123}. Moreover, we incorporate Prospect Theory (PT) into BS utility modeling to capture risk-sensitive semantic decisions under uncertainty \cite{xie2025privacy}. To address the resulting high-dimensional and non-convex optimization problem, we propose a regularized diffusion-based Deep Reinforcement Learning (DRL) algorithm that generates stable and optimal semantic entropy control strategies in dynamic LAENets with covert communication constraints \cite{wen2025hybridrag}. The main contributions of this paper are summarized as follows:
	\begin{itemize}
		\item \textbf{Incentive-aware Semantic Entropy Control Framework in LAENets:} We establish an incentive-aware semantic entropy control framework for covert communication in LAENets, in which semantic entropy is indirectly regulated through task-oriented semantic abstraction decisions at the sensing stage. By controlling verifiable abstraction levels applied to raw observations, the proposed framework governs the uncertainty and richness of task-relevant semantics conveyed over covert channels, enabling UAVs to flexibly balance covert communication constraints and semantic task performance while remaining fully compatible with fixed-dimensional semantic encoding and transmission architectures.
		
		\item \textbf{Contract-based Incentive Mechanism under PT:} To address information asymmetry between the BS and heterogeneous UAVs in LAENets, we formulate a contract-theoretic model in which verifiable semantic abstraction levels are adopted as contract items to regulate semantic entropy. Furthermore, we incorporate PT to model the subjective risk perception and decision bias of the BS in covert and uncertain environments, enabling the proposed contract model to be more practical for semantic service provisioning in LAENets.
		
		\item \textbf{Regularized Diffusion Model for Optimal Contract Design under PT:} We propose a Regularized Diffusion-based Soft Actor-Critic (RDSAC) algorithm to learn adaptive and incentive-compatible contract policies for semantic entropy regulation. Specifically, we employ diffusion models as the policy to generate strategies through forward and reverse diffusion processes. 
		To enhance policy performance, we integrate diffusion entropy regularization and action entropy regularization into the policy optimization objective within a PT utility formulation. The former improves policy diversity and exploration stability, while the latter prevents premature convergence and enhances robustness.
		Numerical results demonstrate that the proposed RDSAC algorithm outperforms DRL baselines in LAENets, achieving an average reward improvement of 3.41\% over the SAC algorithm and 31.44\% over the Proximal Policy Optimization (PPO) algorithm.
	\end{itemize}
	
	The remainder of the paper is organized as follows: In Section \ref{two}, we review the related works. Section \ref{three} introduces the proposed framework for LAENets. In Section \ref{four}, we formulate the contract-based incentive mechanism under PT. In Section \ref{five}, we present the architecture of the RDSAC algorithm. Section \ref{six} provides numerical results to evaluate the performance of the proposed framework and the algorithm. In Section \ref{seven}, we conclude the paper.

	\section{Related Work} \label{two}
	\subsection{Covert Communication in LAENets}
	In LAENets, covert communication has emerged as a prominent research focus in recent studies \cite{11271498,11270843,9382022,10447237}. An increasing number of researchers are dedicating their efforts to designing transmission and mobility strategies that satisfy low-probability-of-detection constraints, aiming to achieve reliable and undetectable UAV communications. For instance, the authors in \cite{11271498} investigated power control strategies for UAV-enabled covert communications under statistical detection constraints. The authors in \cite{11270843} analyzed the fundamental limits of covert transmission capacity in UAV-assisted wireless networks. The authors in \cite{9382022} further studied trajectory and resource optimization for UAV covert communication, exploiting UAV mobility to enhance covertness performance. In \cite{10447237}, the authors examined covert transmission schemes under fading channels and adversarial uncertainty. Nevertheless, most existing studies treat transmitted information as homogeneous bit streams and primarily optimize physical-layer metrics, without considering the semantic relevance of transmitted information or its impact on task-level performance \cite{11059486}.
	
	\subsection{SemCom with Controllable Entropy}
	SemCom has attracted extensive attention as a means to improve communication efficiency by prioritizing task-relevant information over bit-level fidelity. By extracting high-level semantic representations, SemCom systems reduce transmission overhead while preserving task performance. Existing studies have explored semantic abstraction as an implicit way to regulate the amount of conveyed semantic information under resource constraints \cite{11098629,11059486,10447237}. Specifically, the authors in \cite{11098629} developed a task-oriented SemCom framework that relies on deep neural networks to extract task-relevant features. In \cite{11059486}, the authors adjusted the depth of semantic representations to improve efficiency in bandwidth-limited channels, while the authors in \cite{10447237} investigated diffusion-based semantic representation strategies, where different diffusion steps are employed to trade off transmission cost and task performance. In addition, the authors in \cite{11288062} proposed a task-oriented integrated sensing and SemCom framework for multi-device video analytics, which exploits channel state information to guide region-of-interest extraction and semantic transmission, thereby reducing both communication and device-side computational overhead. Despite these advances, existing studies primarily rely on fixed or heuristically chosen semantic abstraction strategies and lack a principled mechanism to explicitly model, regulate, and optimize semantic uncertainty, leaving controllable semantic entropy as an open problem \cite{10854543}.
	\subsection{Diffusion-based DRL Algorithms}
	Diffusion models have recently attracted increasing attention in DRL for policy representation, as they offer high expressive capacity and intrinsic multimodality for modeling complex action distributions \cite{10834569,ding2024diffusion,du2024diffusion}. The authors in \cite{10834569} employed conditional diffusion models to parameterize Q-learning policies, enabling flexible action generation under complex state–action dependencies. The authors in \cite{ding2024diffusion} proposed an efficient diffusion-based policy that reconstructs actions by reversing a progressive corruption process during training, thereby improving sampling efficiency and stability. Building on these properties, diffusion-based DRL algorithms have been increasingly adopted for network optimization problems characterized by high dimensionality and non-convexity \cite{10638123,du2024diffusion}. The authors in \cite{wen2025hybridrag} utilized diffusion-based DRL to derive optimal carbon emission strategies for computation offloading, while the authors in \cite{10638123} developed a diffusion-driven multi-dimensional contract design policy to generate incentive-compatible contracts for embodied agent systems. Motivated by these studies, we adopt a regulated diffusion-based DRL algorithm to solve the semantic entropy control problem arising from incentive-aware SemCom.
	\section{Framework Design} \label{three}
	As shown in Fig.~\ref{system}, we present an incentive-aware semantic entropy control framework for LAENets operating under covert communication constraints, which consists of a BS acting as a sensing task publisher and a set of $M$ UAVs acting as sensing agents, denoted by the set $\mathcal{M} = \{1, \dots, M\}$ \cite{10841438}. Additionally, a warden, denoted as Willie, operates within the environment, continuously monitoring the spectrum to detect any transmissions from the UAVs to the BS \cite{11059486}. In this system, the BS publishes sensing tasks to the UAVs to acquire semantic information from a target area. Upon accepting the task, UAVs collect raw sensory data (e.g., images). However, transmitting raw data is prohibitive due to the high risk of detection imposed by Willie. Therefore, the UAVs employ SemCom techniques to process the raw data, extracting semantic features that preserve the essential meaning while significantly reducing the transmission volume \cite{11059486}.
	\subsection{Covert Transmission Model}
	Due to the existence of Willie, we consider the impact of Willie before measuring the performance of UAVs. In real environments, many different objects act as scatterers or obstacles during Air-to-Ground (A2G) communications \cite{10134570}. Radio signals emitted by UAVs do not propagate in free space but may be affected by scattering or shadowing caused by objects \cite{10134570}, resulting in additional path loss. Thus, we capture the A2G communication by a probabilistic path loss model rather than a simplified free path loss model \cite{10381761, wen2025hybridrag, 10134570}. 
	
	Specifically, the adopted probabilistic path loss model accounts for both the occurrence probabilities and the corresponding path losses under Line-of-Sight (LoS) and Non-LoS (NLoS) conditions. Accordingly, the LoS and NLoS occurrence probabilities for the communication link between UAV $u_i \in \mathcal{M}$ and the BS at time slot $t$ are given by \cite{li2025covert}
	\begin{equation}
		\begin{aligned}
			& {\Pr}^{\mathrm{LoS}}_{u_i,\mathrm{BS}}(t) = \frac{1}{1 + a \exp\left(-b \left( \frac{180}{\pi} \theta_{u_i,\mathrm{BS}}(t) - a \right)\right)},  \\
			& {\Pr}^{\mathrm{NLoS}}_{u_i,\mathrm{BS}}(t) = 1 - {\Pr}^{\mathrm{LoS}}_{u_i,\mathrm{BS}}(t).
		\end{aligned}
	\end{equation}
	Here, $a$ and $b$ are constant parameters related to the environment, and $\theta_{u_i,\mathrm{BS}}(t) = \arctan \left(\frac{h_{u_i}}{d_{u_i,\mathrm{BS}}(t)} \right)$ represents the elevation angle of UAV $u_i$ to the BS, where $h_{u_i}$ is the flight altitude of UAV $u_i$ and $d_{u_i,\mathrm{BS}}(n)$ represents the horizontal distance between UAV $u_i$ and the BS at time slot $t$. Correspondingly, the path losses for the LoS and NLoS links are denoted by $h^{\mathrm{LoS}}(t)$ and $h^{\mathrm{NLoS}}(t)$, respectively, which are given by \cite{9112268}
	\begin{equation}
		\begin{aligned}
			h^{\mathrm{LoS}}(t) &= \left( \frac{4\pi f_c d_{u_i, \mathrm{BS}}^{\mathrm{3D}}(t)}{c} \right)^2 \beta^{\mathrm{LoS}}, \\
			h^{\mathrm{NLoS}}(t) &= \left( \frac{4\pi f_c d_{u_i, \mathrm{BS}}^{\mathrm{3D}}(t)}{c} \right)^2 \beta^{\mathrm{NLoS}}, 
		\end{aligned}
	\end{equation}
	where $f_c$ represents the carrier frequency, $c$ is the light speed, and $\beta^{\mathrm{LoS}}$ and $\beta^{\mathrm{NLoS}}$ denote the attenuation coefficients corresponding to the LoS and NLoS links, respectively. $d_{u_i, \mathrm{BS}}^{\mathrm{3D}}(t)$ represents the three-dimensional distance between UAV $u_i$ and the BS at time slot $t$, which is given by 
	\begin{equation}
		d_{u_i, \mathrm{BS}}^{\mathrm{3D}}(t) = \sqrt{d_{u_i, \mathrm{BS}}^2(t) + h_{u_i}^2}.
	\end{equation}
	
	We denote the expected channel power gain as $\mathbb{E}[|h_{u_i,\mathrm{BS}}(t)|^2]$. Accounting for the probabilistic LoS and NLoS channel models, $\mathbb{E}[|h_{u_i,\mathrm{BS}}(t)|^2]$ can be derived as a weighted combination of the LoS and NLoS components, which is expressed as\cite{wei2025uav}
	\begin{equation}\label{h}
		\mathbb{E}\left[ |h_{u_i,\mathrm{BS}}(t)|^2 \right] = h^{\mathrm{LoS}}(t) {\Pr}^{\mathrm{LoS}}_{u_i,\mathrm{BS}}(t) + h^{\mathrm{NLoS}}(t) {\Pr}^{\mathrm{NLoS}}_{u_i,\mathrm{BS}}(t).
	\end{equation}
	
	The warden Willie attempts to determine whether a UAV is sending information to the BS. Regarding this, a binary hypothesis is considered where $\mathcal{H}_0$ indicates that a UAV is not sending a signal, and $\mathcal{H}_1$ indicates that a UAV is sending a signal, which is given by \cite{li2025covert}
	\begin{equation}
		y_w = 
		\begin{cases} 
			\mathcal{H}_0: n_w, & i = 1, 2, \dots, N, \\
			\mathcal{H}_1: \sum_{i = 1}^N \sqrt{P_{u_i}} \, h_{u_{i},w} + n_w, & i = 1, 2, \dots, N ,
		\end{cases}
	\end{equation}
	where $P_{u_i}$ denotes the transmit power of UAV $u_i$, $h_{u_i,w}$ represents the channel coefficient from UAV $u_i$ to Willie $w$, which follows the same probabilistic LoS and NLoS channel models as defined in (\ref{h}), and $n_w$ denotes the Additive White Gaussian Noise (AWGN) with variance $\sigma^2$. Willie measures the received power $\overline{P} = |y_w|^2$, which simplifies to \cite{li2025covert}
	\begin{equation} \label{power}
		\overline{P} = |y_w|^2 = 
		\begin{cases} 
			\mathcal{H}_0: \sigma_w^2, & \text{(No transmission)}, \\
			\mathcal{H}_1: P_{u,w} + \sigma_w^2, & \text{(Transmission active)},
		\end{cases}
	\end{equation}
	where $P_{u,w} = \sum_{i = 1}^N P_{u_i} \, |h_{u_{i},w}|^2 $ is the power received from the UAVs. Willie makes a decision by comparing the measured power against a predefined detection threshold $\varsigma$: $\overline{P} \underset{\mathcal{H}_0}{\overset{\mathcal{H}_0}{\gtrless}} \varsigma$. Specifically, if the measured power $\overline{P}$ exceeds the threshold $\varsigma$, Willie infers the presence of UAV transmission; otherwise, it infers the absence of communication. In other words, the decision $\mathcal{D}_0$ is made if $\overline{P} \le \varsigma$, and $\mathcal{D}_1$ otherwise. Hence, the total detection probability $\epsilon$ is expressed as \cite{9112268, wei2025uav}
	\begin{equation}
		\epsilon = \pi_1 \cdot {\Pr}(\mathcal{D}_1 \mid \mathcal{H}_1) + \pi_0 \cdot {\Pr}(\mathcal{D}_0 \mid \mathcal{H}_0),
	\end{equation}
	where $\pi_0$ and $\pi_1$ denote the prior probabilities of hypotheses $\mathcal{H}_0$ and $\mathcal{H}_1$, respectively. To maximize the uncertainty at Willie, we consider the classical strategy of equal prior probability $\pi_0 = \pi_1 = 0.5$, which has been proven in \cite{10841438} to minimize the detection probability. From $\overline{P}$ in (\ref{power}) and the decision rule, the detection probability is given by \cite{10841438}
	\begin{equation}
		\epsilon = 
		\begin{cases}
			1, & \sigma_w^2 \leq \varsigma \leq P_{u,w} + \sigma_w^2, \\
			0.5, & \text{otherwise}.
		\end{cases}
	\end{equation}
	Similar to the proof in \cite{10841438}, $\epsilon$ relies on the selection of the detection threshold $\varsigma$. Thus, $\epsilon$ becomes a standard Bernoulli random variable with two possible outcomes $\epsilon = 1$ and $\epsilon = 0.5$ according to the given threshold $\varsigma$. This means that when $\sigma_w^2 \leq \varsigma \leq P_{u,w} + \sigma_w^2$, we have ${\Pr}(\mathcal{D}_0 \mid \mathcal{H}_0) = {\Pr}(\mathcal{D}_1 \mid \mathcal{H}_1) = 1$, which leads to $\epsilon = 1$. When $\varsigma < \sigma_w^2$, we have ${\Pr}(\mathcal{D}_0 \mid \mathcal{H}_0) = 0$ and ${\Pr}(\mathcal{D}_1 \mid \mathcal{H}_1) = 1$, which leads to $\epsilon = 0.5$. Finally, when $\varsigma > P_{u,w} + \sigma_w^2$, we have ${\Pr}(\mathcal{D}_0 \mid \mathcal{H}_0) = 1$ and ${\Pr}(\mathcal{D}_1 \mid \mathcal{H}_1) = 0$, which also leads to $\epsilon = 0.5$.
	Therefore, the probability of Willie's successful detection is defined as ${\Pr} \{ \epsilon = 1 \}$ \cite{10841438}, and accordingly, the covert probability is 
	\begin{equation} \label{covert}
		\epsilon_0 = 1 - {\Pr} \{ \epsilon = 1 \}.
	\end{equation}
	\subsection{Covert Semantic Information Density}
	To evaluate the effectiveness of semantic information delivery under covert communication, a unified performance metric is required that jointly captures the impacts of covert transmission reliability, physical-layer packet delivery, and SemCom quality \cite{11059486}. In LAENets, covertness constraints limit when transmissions can occur, packet errors determine whether semantic representations are correctly received, and semantic abstraction controls how much task-relevant information is conveyed, which can be quantified by semantic entropy \cite{10095748}. Motivated by this observation, we propose a metric termed \emph{covert semantic information density} $\mathcal{Q}(g)$, which reveals how physical-layer effects and SemCom mechanisms jointly influence the effective semantic information delivery. In the following, we first characterize the packet error rate over the covert UAV-BS link, and then introduce the semantic entropy regulation module, which governs the degree of semantic abstraction applied prior to encoding.
	
	\subsubsection{Packet error rate}
	To facilitate an analytical evaluation of the system performance, we consider that in a given time slot, only one UAV $u_i$ is the task executor, while the signals from other non-scheduled UAVs are treated as aggregated interference from the BS \cite{10095748}. Therefore, the Signal-to-Interference-plus-Noise Ratio (SINR) observed at the BS when communicating with UAV $u_i$ is given by \cite{wei2025uav}
	\begin{equation}
		{\Upsilon}_\mathrm{u_i, BS} = \frac{\sigma_\mathrm{u_i, BS}^2}{\sigma_\mathrm{u_j, BS}^2 + \sigma_\mathrm{BS}^2},
	\end{equation}
	where $\sigma_\mathrm{u_i, BS}^2 = P_{u_i} | h_\mathrm{u_{i},BS}|^2 $ represents the signal power received from the UAV $u_i$. The cumulative interference power is $\sigma_\mathrm{u_j, BS}^2 =  \sum_{j = 1, j \ne i}^N P_{u_j} |h_\mathrm{u_{j},BS}|^2 $, and ${\sigma_{\mathrm{BS}}}^2$ is the variance of the AWGN at the BS.
	
	For simplicity, the semantic information $\mathcal{S}_i$ is transmitted as a single packet in the uplink \cite{10095748}. The BS and the warden employ a Cyclic Redundancy Check (CRC) mechanism to verify the integrity of the received semantic information \cite{10095748}, where the CRC mechanism appends a checksum to the transmitted packet and declares a packet
	error if any inconsistency is detected at the receiver. For each $N$-bit semantic information $\mathcal{S}_i$, the PER at the BS is given by\cite{10095748}
	\begin{equation} \label{per}
		\mathcal{P}_{\mathrm{PER}_\mathrm{u_i,BS}} = 1 - \, \exp\left( -\frac{\ln(C N)}{\beta \, {\Upsilon}_\mathrm{u_i, BS}} \right) \, \Gamma\left(1 + \frac{1}{\beta \, {\Upsilon}_\mathrm{u_i, BS}}\right) ,
	\end{equation} 
	where $\beta$ and $C$ are the modulation-specific constants, and $\Gamma(\cdot)$ represents the standard Gamma function. In the considered system, erroneous packets are directly discarded at the receiver, and the BS does not request retransmission of the corresponding semantic information from the UAVs \cite{10095748}.
	
	\subsubsection{Semantic entropy regulation module}
	Considering that most of the existing SemCom systems encode images into fixed-dimensional feature vectors for transmission \cite{9685667, 10158995}, to regulate the semantic entropy of sensory data, we adopt a hierarchical semantic abstraction mechanism parameterized by $g \in \{1,2,3\}$, where the three levels represent typical coarse-to-fine semantic abstraction and can be readily extended if needed. This mechanism progressively exposes richer semantic structures to the encoder across different abstraction levels, thereby inducing ordered semantic entropy levels without altering the transmission payload size.
	
	UAV platforms leverage off-the-shelf perception models, such as lightweight object detectors (e.g., YOLOv5) and semantic segmentation networks (e.g., DeepLabv3 with a ResNet backbone), to extract task-relevant semantic regions \cite{10935177}. Based on the output of these models, a binary semantic mask $\mathbf{M} \in \{0,1\}^{H \times W}$ is constructed, where $\mathbf{M}(\mathbf{p})=1$ indicates task-critical semantic regions, and $\mathbf{M}(\mathbf{p})=0$ corresponds to background content. Within this framework, we define three ordered semantic abstraction levels as follows:
	\begin{itemize}
		\item \textbf{Task-critical abstraction ($g = 1$):} Only the most task-relevant semantic regions are preserved, yielding a highly compact and low-entropy semantic representation focused on core task information. This is implemented by masking out task-irrelevant regions, given by
		\begin{equation}
			\hat{\mathbf{X}}^{(1)} = \mathbf{X} \odot \mathbf{M}.
		\end{equation}
		where $\odot$ denotes the element-wise (Hadamard) product, such that pixels outside task-critical semantic regions are explicitly suppressed to zero.
		\item \textbf{Context-augmented abstraction ($g = 2$):} Task-relevant regions are preserved at full fidelity, while background regions are preserved in a highly abstracted form to provide coarse contextual cues. Specifically, background information is processed by a low-pass abstraction operator $\mathcal{B}{\kappa}(\cdot)$, implemented via spatial downsampling followed by upsampling, which retains low-frequency structural information while discarding fine-grained details. The process can be expressed as
		\begin{equation}
			\hat{\mathbf{X}}^{(2)} = \left(\mathbf{X} \odot \mathbf{M}\right) + \left(\mathcal{B}{\kappa}(\mathbf{X}) \odot (\mathbf{1}-\mathbf{M})\right).
		\end{equation}
		\item \textbf{Full semantic preservation ($g = 3$):} No abstraction is applied, and the original observation is directly encoded, preserving the complete semantic structure and yielding the highest semantic entropy, which is given by
		\begin{equation}
			\hat{\mathbf{X}}^{(3)} = \mathbf{X}.
		\end{equation}
	\end{itemize}
	\subsubsection{Metric design}
	
	We first consider the physical-layer transmission reliability. According to (\ref{per}), the effective delivery probability of a semantic packet is ($1-\mathcal{P}_{\mathrm{PER}_{u_i, B}}$). In addition, covert communication imposes a constraint on transmission, characterized by the covert transmission success probability $\epsilon_0$ in (\ref{covert}), which reflects the probability that the transmission remains statistically undetectable to adversaries.
	Upon successful and reliable reception, the utility of the semantic packet is determined by the quality of the conveyed semantic content. For a given semantic abstraction level $g$, the original observation $\mathbf{X}$ is first processed by the semantic abstraction module and then encoded and decoded by the semantic transceiver, yielding a reconstructed semantic observation denoted by $\hat{\mathbf{X}}^{(g)}$. Due to the nature of SemCom, the reconstructed semantic information may incur task-level distortion, which in turn affects the utility of the received semantic packet\cite{liu2022task}.
	
	To characterize the impact of semantic distortion on task performance, we introduce the semantic error probability $P_E(g)$, which captures the likelihood that the reconstructed semantic information $\hat{\mathbf{X}}^{(g)}$ fails to support the intended task. As the abstraction level increases, contextual information may be progressively discarded, leading to a monotonic degradation in task performance. Accordingly, the semantic error probability is modeled as \cite{11059486}
	\begin{equation}
		P_E(g) = P_{E,\mathrm{base}} \left(1 + \lambda \bigl(1 - \mathrm{SSIM}(\mathbf{X}, \hat{\mathbf{X}}^{(g)})\bigr)\right),
	\end{equation}
	where $P_{E,\mathrm{base}}$ denotes the baseline task error rate, $\lambda$ is a task-dependent sensitivity parameter, and $\mathrm{SSIM}(\mathbf{X}, \hat{\mathbf{X}}^{(g)})$ measures the structural similarity between the original observation and the reconstructed semantic representation \cite{10447237}. Based on the semantic error probability, the semantic information degree $I_S(\hat{\mathbf{X}}^{(g)})$ is defined as a normalized task-oriented semantic utility metric, which is expressed as
	\begin{equation}
		I_S(\hat{\mathbf{X}}^{(g)}) = 1 - P_E(g).
	\end{equation}
	Combining transmission reliability and semantic utility, the expected semantic value delivered per transmission is~\cite{10095748} 
	\begin{equation}
		V(g) = \left(1 - \mathcal{P}_{\mathrm{PER}_\mathrm{u_i,BS}}\right) I_S(\hat{\mathbf{X}}^{(g)}).
	\end{equation}
	Finally, considering the constraints of covert communication, we define the $\mathcal{Q}(g)$ as \cite{10841438}
	\begin{equation}
		\mathcal{Q}(g) = \alpha \cdot \epsilon_0 \cdot \frac{V(g)}{L},
	\end{equation}
	where $\alpha>0$ is a scaling factor and $L$ denotes the fixed transmission payload size, which is consistent with most existing image SemCom systems, in which semantic encoders transform visual observations into fixed-dimensional latent feature representations for transmission \cite{9685667, 10158995}. As a result, variations in $\mathcal{Q}(g)$ reflect the tradeoff among covert feasibility, transmission reliability, and semantic information quality under different semantic entropy regulation strategies. \footnote{Wireless uncertainties, such as transmit power variation, channel fading, and UAV mobility, may affect practical covert communication systems \cite{10090449}. As this work focuses on the behavioral interaction between UAVs and the BS, incorporating detailed uncertainty modeling is left for future investigation \cite{10841438}.}
	
	The above analysis indicates that the achievable covert SemCom performance depends not only on semantic abstraction decisions but also on the heterogeneous and privately known characteristics of UAVs. In SemCom-enabled LAENets, the BS relies on multiple UAVs to perform task-oriented sensing and semantic information delivery under stringent concealment constraints, while the intrinsic resource costs and preferences of UAVs remain private information. This information asymmetry prevents the BS from directly coordinating UAV behaviors to optimize the overall system utility. Fortunately, contract theory provides a principled economic framework for incentive design under asymmetric information \cite{10638123}, enabling the BS to induce self-selection by offering a menu of contracts tailored to different UAV types. Motivated by this, we design a contract-based incentive mechanism to coordinate heterogeneous UAVs and align their individual decisions with the overall system objective.
	
	\section{Contract-based Incentive Mechanism Under Prospect Theory} \label{four}
	In this section, we present a contract-based incentive mechanism under PT to motivate UAVs to provide task-oriented covert SemCom services to the BS. We first formulate the utility functions of both UAVs and the BS. Then, we propose a contract theoretic model and validate its feasibility.
	
	\subsection{Utility Function} 
	In the system model, the BS publishes sensing tasks, while each UAV provides task-oriented covert SemCom services by selecting an appropriate semantic abstraction level. Due to information asymmetry, the BS cannot precisely observe the heterogeneous private characteristics of UAVs, such as their energy constraints, onboard resource availability, and risk tolerance associated with covert SemCom. As commonly adopted in contract-theoretic modeling, similar to \cite{10841438, xie2025privacy}, we abstract these factors into a single type parameter representing the preference of UAVs for covert SemCom. The greater the preference of UAVs, the higher the covert semantic information density $\mathcal{Q}$ it can provide. For ease of analysis, the preference of UAVs is defined as a weighted value of battery capacity and data storage capacity, with each contributing half of the total weight \cite{10841438}. Thus, the UAVs can be classfied into a set $\mathcal{K} = \{\theta_k : 1 \le k \le K \}$ of $K$ types. We denote the $k$-th type of UAVs as $\theta_k$. In nondecreasing order, the UAV types are sorted as $\theta_1 \le \theta_2 \le \cdot\cdot\cdot \le \theta_{K}$. A higher value of $\theta$ indicates a UAV with superior resources, capable of performing more effective semantic extraction and covert communication. To facilitate explanation, the UAV with type $k$ can be called the type-$k$ UAV, and the proportion of type-$k$ UAVs is denoted as $\lambda_k$, with $\sum^K_{k = 1} \lambda_k = 1$.
	
	\subsubsection{UAV utility}
	The utility of type-$k$ UAVs, denoted by $U_k^A$, is defined as the difference between the evaluated reward $r(\theta_k, R_k)$ and the intrinsic resource cost $c(\mathcal{Q}_k)$ associated with achieving the required covert SemCom performance $\mathcal{Q}_k$ \cite{xie2025privacy}, where $\mathcal{Q}_k \triangleq \mathcal{Q}(g)$. Thus, the utility of type-$k$ UAVs is given by \cite{10841438, 10638123, 11341901}
	\begin{equation}
		\begin{aligned}
			U_k^A(\mathcal{Q}_k, R_k) & = r(\theta_k, R_k) - c(\mathcal{Q}_k) \\
			& = \upsilon \theta_k R_k - a {\mathcal{Q}_k},
		\end{aligned}
	\end{equation}
	where $\upsilon$ is a predefined weight parameter about the incentive $R_k$ of type-$k$ UAVs, and $a$ is the effective unit cost coefficient for delivering semantic efficiency $\mathcal{Q}_k$.
	\subsubsection{BS utility}
	The objective utility of the BS toward type-$k$ UAVs is denoted as $U_k^{\mathrm{BS}}$, which is the difference between the revenue generated by receiving the semantic performance $\mathcal{Q}_k$ and the reward $R_k$ paid to the type-$k$ UAV \cite{9220821}. The revenue is defined as a performance-based metric that evaluates $\mathcal{Q}_k$ \cite{10841438}. Thus, the utility is expressed as \cite{10271832}
	\begin{equation}
		U_k^{\mathrm{EUT}} = \beta \ln{(\mathcal{Q}_k + 1)} - R_k ,
	\end{equation}
	where $\beta > 0$ denotes the unit profit for $Q_k$.
	
	Under information asymmetry, the BS is aware of the type distribution but cannot observe the exact private type of each UAV \cite{10638123}, which introduces decision uncertainty. To address this issue, the BS can adopt Expected Utility Theory (EUT) to formulate its overall objective utility as
	\begin{equation}
		U_{\mathrm{EUT}}^{\mathrm{BS}} = M \sum_{k=1}^{K} \lambda_k U_k^{\mathrm{EUT}}.
	\end{equation}
	
	However, under uncertain and risky operating conditions, the decision-making process of the BS may deviate from the assumption of perfect rationality. For example, when evaluating stochastic task completion gains or potential losses arising from failed or partially completed UAV sensing tasks, the BS may exhibit heterogeneous risk attitudes, including risk-seeking or risk-averse behaviors \cite{xie2025privacy}. 
	Therefore, EUT cannot fully capture the risk preferences of the BS during the uncertain decision-making process. 
	To address this limitation, we adopt PT to further model the objective utility of the BS, making the contract formulation more practical. 
	Given a reference point $U_\mathrm{ref}$ for all types of UAVs, we convert $U_k^\mathrm{EUT}$ into the subjective utility, which is given by \cite{xie2025privacy, 10254627}
	\begin{equation}
		\label{PT}
		U_{k,\mathrm{PT}}^\mathrm{BS} = 
		\begin{cases}
			(U_k^\mathrm{EUT} - U_\mathrm{ref})^{\zeta^+}, &U_k^\mathrm{UET} \ge U_\mathrm{ref}, \\
			-\eta (U_\mathrm{ref} - U_k^\mathrm{EUT})^{\zeta^-}, &U_k^\mathrm{UET} < U_\mathrm{ref},
		\end{cases}
	\end{equation}
	where $\zeta^-, \zeta^+ \in (0, 1]$ denote the weighting parameters associated with gain and loss distortion, respectively, and $\eta \ge 0$ represents the loss aversion coefficient. Based on (\ref{PT}), the overall subjective utility of the BS is formulated as
	\begin{equation}
		U_\mathrm{PT}^\mathrm{BS} = M \sum_{k=1}^{K} \lambda_k U_k^{\mathrm{PT}}.
	\end{equation}

	\subsection{Contract Formulation} 
	To address information asymmetry between the BS and UAVs for task-oriented sensing and semantic information delivery, the BS acts as the leader and designs a menu of contract items, while each UAV selects the most suitable contract item according to its type. The contract item is denoted as $\Phi = {(\mathcal{Q}_k, R_k), k \in \mathcal{K}}$. To induce truthful selection of type-specific items, the feasible contract should satisfy the following Individual Rationality (IR) and Incentive Compatibility (IC) constraints.
	
	\begin{definition} 
		(Individual Rationality) The type-$k$ UAV achieves a nonnegative utility by selecting the contract item $(\mathcal{Q}_k, R_k)$ corresponding to its type, i.e.,
		\begin{equation} \label{ir}
			U_k^{A}(\mathcal{Q}_k, R_k) = \upsilon \theta_k R_k - a {\mathcal{Q}_k} \ge 0, \quad \forall k \in \mathcal{K}.
		\end{equation}
	\end{definition}
	
	\begin{definition}    
		(Incentive Compatibility) Each type-$k$ UAV prefers the contract item $(\mathcal{Q}_k, R_k)$ tailored to its type over any other item $(\mathcal{Q}_n, R_n), n \in \mathcal{K},$ and $n \neq k$, i.e.,
		\begin{equation}
			\label{IC}
			U_k^{A}(\mathcal{Q}_k, R_k) \ge U_n^{A}(\mathcal{Q}_n, R_n), \quad \forall n, k \in \mathcal{K}, n \neq k.
		\end{equation}
	\end{definition}
	
	The IR constraints can encourage UAVs to participate in covert SemCom by ensuring that the utility of UAVs is nonnegative, and the IC constraints can guarantee that each UAV provides high-quality covert SemCom by choosing the optimal contract item designed for its type. Based on the IR and IC constraints, the problem of maximizing the overall subjective utility of the BS is formulated as
	\begin{problem} \label{prob1}
		\begin{equation}
			\begin{aligned}
				& \qquad \max_{\mathbf{\mathcal{Q}}, \mathbf{R}} U^{\mathrm{BS}}_\mathrm{PT} \\
				& \text{s.t.} \quad U_k^{A}(\mathcal{Q}_k, R_k) \ge 0 , \forall k \in \mathcal{K}, \\
				& \phantom{\text{s.t.}} \quad U_k^{A}(\mathcal{Q}_k, R_k) \ge U_n^{A}(\mathcal{Q}_n, R_n), \forall n, k \in \mathcal{K}, n \neq k, \\
				& \phantom{\text{s.t.}} \quad \mathcal{Q}_k \geq 0, \, R_k \geq 0, \forall k \in \mathcal{K},
			\end{aligned}
		\end{equation}   
	\end{problem}
	where $\mathbf{\mathcal{Q}} = [\mathcal{Q}_k]_{1 \times K}$ and $\mathbf{R} = [R_k]_{1 \times K}$.
	\subsection{Optimal Contract Solution Feasibility}
	Since \textbf{Problem} \ref{prob1} contains $K(K-1)$ IC constraints, solving it directly becomes intractable. Therefore, we reformulate \textbf{Problem} \ref{prob1} based on the following necessary conditions.
	\begin{lemma}\label{lem6}
		With information asymmetry, a feasible contract should satisfy the following conditions:
		\begin{subequations}
			\begin{align} 
				& \upsilon \theta_1 R_1 - a \mathcal{Q}_1  \ge 0, \label{ir111} \\ 
				& \upsilon \theta_i R_i - a \mathcal{Q}_i \ge \upsilon \theta_i R_{i-1} - a \mathcal{Q}_{i-1}, \forall i \in \{2, \dots, K\}, \label{ic111}\\   
				& \upsilon \theta_i R_i - a \mathcal{Q}_i \ge \upsilon \theta_i R_{i+1} - a \mathcal{Q}_{i+1}, \forall i \in \{1, \dots, K - 1\}, \label{ic222}\\  
				& R_K \ge R_{K-1} \ge \cdot\cdot\cdot \ge R_1, \mathcal{Q}_K \ge \mathcal{Q}_{K-1} \ge \cdot\cdot\cdot \ge \mathcal{Q}_1.  \label{ic333}
			\end{align}
		\end{subequations}
	\end{lemma}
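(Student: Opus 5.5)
The plan is to read the lemma as a list of consequences of the IR constraint \eqref{ir} and the IC constraint \eqref{IC}, and to derive each one directly. One point of interpretation comes first. In \eqref{IC}, the right-hand side $U_n^{A}(\mathcal{Q}_n,R_n)$ has to mean the utility that a \emph{type-$k$} UAV gets from item $n$, namely $\upsilon\theta_k R_n - a\mathcal{Q}_n$. This is the only reading under which self-selection makes sense, and it is the form that appears in \eqref{ic111} and \eqref{ic222}. So the IC constraint I work with is $\upsilon\theta_k R_k - a\mathcal{Q}_k \ge \upsilon\theta_k R_n - a\mathcal{Q}_n$ for all $n\neq k$.

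With that reading, \eqref{ir111} is simply \eqref{ir} at $k=1$. Likewise \eqref{ic111} and \eqref{ic222} are the IC constraints restricted to the neighbouring items $n=i-1$ and $n=i+1$. They are necessary because they form a subset of the full $K(K-1)$ IC constraints.

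The substantive part is the monotonicity condition \eqref{ic333}. Take any $i>j$ with $\theta_i>\theta_j$, and write the two IC constraints that link types $i$ and $j$:
\begin{equation*}
\upsilon\theta_i R_i - a\mathcal{Q}_i \ge \upsilon\theta_i R_j - a\mathcal{Q}_j,\qquad \upsilon\theta_j R_j - a\mathcal{Q}_j \ge \upsilon\theta_j R_i - a\mathcal{Q}_i .
\end{equation*}
Adding the two inequalities cancels the cost terms and leaves $\upsilon(\theta_i-\theta_j)(R_i-R_j)\ge 0$. Since $\upsilon>0$ and $\theta_i>\theta_j$, this gives $R_i\ge R_j$. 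Next, rearrange the second inequality (type $j$ does not envy item $i$) as $a(\mathcal{Q}_i-\mathcal{Q}_j)\ge \upsilon\theta_j(R_i-R_j)$. The right-hand side is nonnegative because $\theta_j\ge 0$ and $R_i\ge R_j$, and $a>0$, so $\mathcal{Q}_i\ge\mathcal{Q}_j$. Applying this to consecutive indices yields both chains in \eqref{ic333}.

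The main obstacle I expect is the case of ties $\theta_i=\theta_j$, which the nondecreasing ordering allows. In that case the summed inequality carries no information. The two IC constraints only force $\upsilon\theta_i(R_i-R_j)=a(\mathcal{Q}_i-\mathcal{Q}_j)$: the two items give equal utility and lie on the same indifference line. My approach is to handle this without loss of generality. Types with equal $\theta$ are indistinguishable, so their items can be relabelled, sorted by $R$, and hence by $\mathcal{Q}$ along that line, without affecting IR, IC, or feasibility. The alternative is to state that the types are strictly ordered, which is the standard assumption in this contract literature.
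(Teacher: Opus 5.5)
Your proof is correct and is the standard argument behind the paper's proof, which simply defers to \cite{10638123}: (\ref{ir111})--(\ref{ic222}) are special cases of the IR and IC constraints, and (\ref{ic333}) follows by adding the two IC constraints that link a pair of types, then using the lower type's IC to carry monotonicity from $R$ over to $\mathcal{Q}$. Your reading of the right-hand side of (\ref{IC}) as $\upsilon\theta_k R_n - a\mathcal{Q}_n$ is the correct one, and your treatment of ties $\theta_i=\theta_j$ is also right, since there a feasible contract can genuinely violate (\ref{ic333}) unless types are strictly ordered or items are relabeled; both points repair imprecisions that the paper's statement leaves implicit.
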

	\begin{proof}
		Please refer to \cite{10638123}.
	\end{proof}
	Constraint (\ref{ir111}) corresponds to the IR constraints, while (\ref{ic111}), (\ref{ic222}), and (\ref{ic333}) characterize the IC constraints. In particular, (\ref{ic111}) and (\ref{ic222}) indicate that the IC constraints can be equivalently reduced to the local downward incentive compatibility with monotonicity \cite{10638123}.
	
	Based on \textbf{Lemma} \ref{lem6}, we can derive the optimal reward to UAVs for covert SemCom service provisions by using the iterative method \cite{10271832}, given by
	\begin{equation} \label{reward}
		R_k^* = \frac{a}{\upsilon} \frac{\mathcal{Q}_1}{\theta_1} + \frac{a}{\upsilon} \sum_{i = 1}^k \Delta_i, k \in \mathcal{K},
	\end{equation}
	where $\Delta_1 = 0$ and $\Delta_i = ((\mathcal{Q}_i - \mathcal{Q}_{i-1}) / \theta_i), i \in \{2, \dots, k\}$. By substituting the optimal reward expression in (\ref{reward}) into $\textbf{Problem}$~\ref{prob1}, the original optimization problem can be significantly simplified. Specifically, since the optimal rewards ${R_k^*}$ are fully determined by the variables ${\mathcal{Q}_k}$ and UAV type parameters ${\theta_k}$, the reward variables can be eliminated from the optimization. Therefore, $\textbf{Problem}$~\ref{prob1} is equivalently reduced to an optimization problem with respect to $\mathcal{Q}$ only, while all IC (\ref{IC}) and IR (\ref{ir}) constraints are implicitly satisfied. This reformulation substantially reduces the problem dimensionality and facilitates subsequent analysis and solution, which can be expressed as
	\begin{problem} \label{prob3}
		\begin{equation}  \label{prob33}
			\begin{aligned}
				& \qquad \max_{\mathbf{\mathcal{Q}}} U^{\mathrm{BS}}_\mathrm{PT} \\
				& \text{s.t.} \quad R_1 = \frac{a}{\upsilon} \frac{\mathcal{Q}_1}{\theta_1}, \\
				& \phantom{\text{s.t.}} \quad R_i = \frac{a}{\upsilon} \frac{\mathcal{Q}_1}{\theta_1} + \frac{a}{\upsilon} \sum_{i = 1}^i \frac{\mathcal{Q}_i - \mathcal{Q}_{i-1}}{\theta_i}, \forall i \in \{2, \dots, K\}, \\
				& \phantom{\text{s.t.}} \quad \mathcal{Q}_i \geq 0, \, R_i \geq 0, \forall i \in \{2, \dots, K\}.
			\end{aligned}
		\end{equation} 
	\end{problem}
	\begin{remark}
		We establish the existence and feasibility of an optimal contract in the proposed framework and model the PT-based contract design as a challenging decision-making problem under information asymmetry. Although conventional DRL methods such as PPO and SAC can be applied to contract optimization \cite{wang2022diffusion}, their reliance on local policy updates and explicit action sampling often leads to unstable convergence and inefficient exploration when the contract space is highly nonlinear and multimodal \cite{wen2025hybridrag,  ding2024diffusion}. In particular, under covert communication requirements, small perturbations in semantic abstraction decisions may induce non-smooth variations in the utility landscape, making standard DRL policies prone to premature convergence. To overcome these limitations, we adopt diffusion models as policy representations due to their ability to capture complex high-dimensional decision distributions \cite{10638123, ding2024diffusion}. Building on this, we propose the RDSAC algorithm, which jointly incorporates diffusion entropy regularization and action entropy regularization to balance exploration and stability while preserving the structural properties of the contract space \cite{10834569}, rendering diffusion-based DRL well-suited for solving the proposed PT-driven contract optimization problem.
	\end{remark}
	
	\begin{figure*}[t] 
		\vspace{-2.8em}
		\centering
		\captionsetup{font=footnotesize}
		\includegraphics[width=0.8\textwidth]{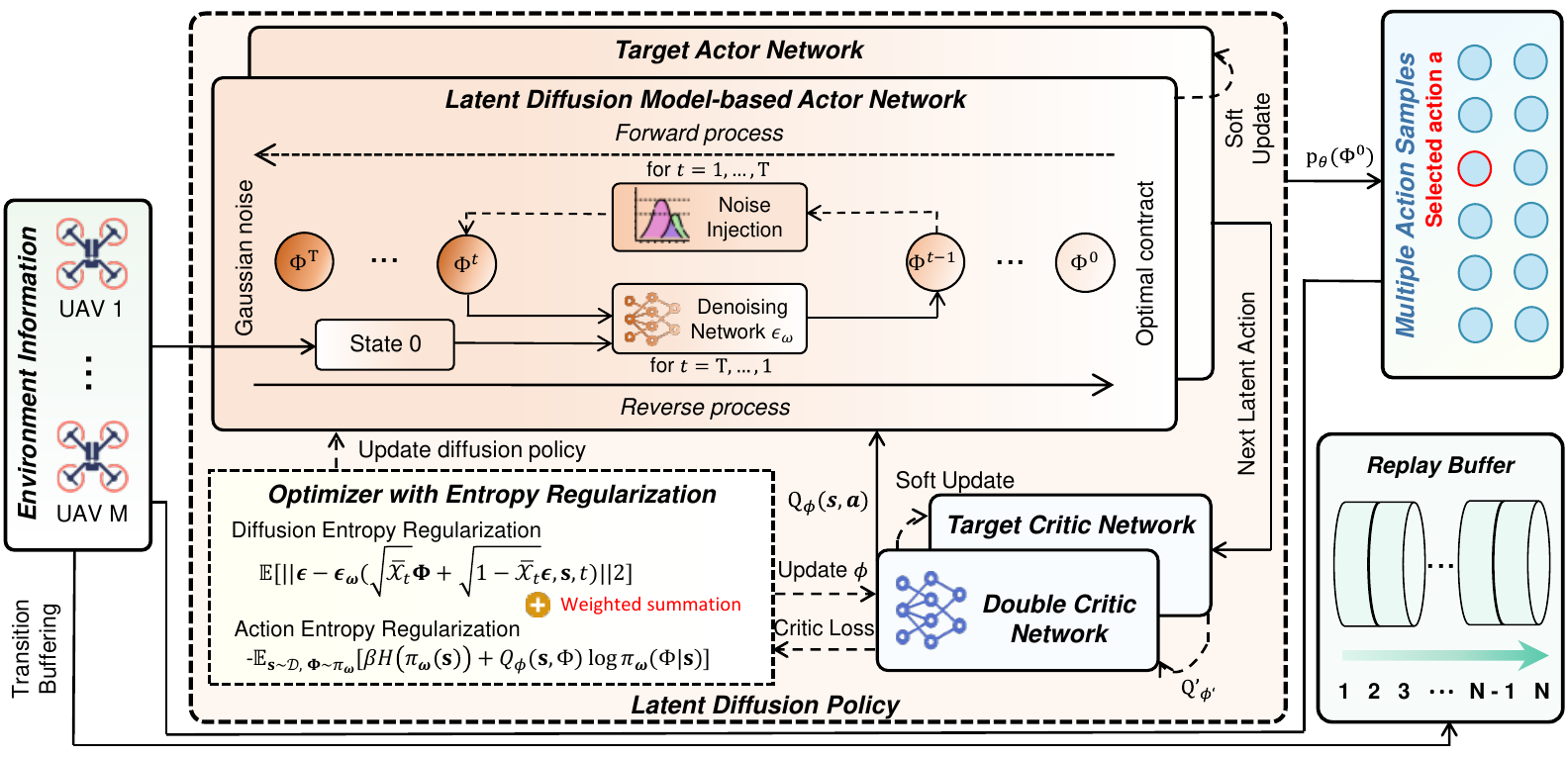}
		\caption{The architecture of the proposed RDSAC algorithm for optimal contract design under PT, where action entropy regularization and diffusion entropy regularization are incorporated into the diffusion-based policy to improve policy expressiveness and training stability.} 
		\label{framework}
	\end{figure*}
	
	\section{Regulated-based Diffusion Models for Optimal Contract Design Under Prospect Theory} \label{five}
	In this section, we introduce the RDSAC algorithm to generate the optimal contract design under PT, as shown in Fig.~\ref{framework}. We first model \textbf{Problem} \ref{prob3} as a Markov Decision Process (MDP). We then present the architecture of the proposed RDSAC algorithm.
	
	\subsection{MDP Modeling}
	We formulate \textbf{Problem} \ref{prob3} as a MDP $\langle \mathcal{S, A, R, \gamma} \rangle$, where $\mathcal{S}$ denotes the state space composed of state vectors $\mathbf{s}$, $\mathcal{A}$ denotes the action space composed of action vectors $\mathbf{a}$, $\mathcal{R}$ represents the reward function, and $\gamma \in [0, 1]$ represents the discount factor controlling future returns. 
	The detailed designs are shown as follows:
	\subsubsection{State vector}
	In contract modeling, the environment that affects the optimal contract design is defined as
	\begin{equation}
		\mathbf{s} \triangleq \{M, K, U_\mathrm{ref}, a, \Upsilon, (\lambda_1, \dots, \lambda_K), (\theta_1, \dots, \theta_K) \},
	\end{equation}
	where the total dimensionality of the state $\mathbf{s}$ is ($2K + 5$).
	\subsubsection{Action vector}
	The objective of the DRL agent (i.e., the BS) is to determine an optimal contract conditional on the state $\mathbf{s}$. However, based on (\ref{reward}), the optimal reward can be derived from $\mathcal{Q}$. Hence, the action vector is given by
	\begin{equation}
		\mathbf{a} \triangleq \{\mathcal{Q}_0, \dots, \mathcal{Q}_K \},
	\end{equation}
	where the total dimensionality of the action $\mathbf{a}$ is $K$.
	\subsubsection{Reward function}
	The reward function typically accounts for both the objective and the associated constraints of  \textbf{Problem} \ref{prob3}. Thus, if the action $\mathbf{a}$ both satisfies the IC and IR constraints, $r(\mathbf{s}, \mathbf{a})$ is represented as
	\begin{equation}
		\begin{aligned}
			\mathcal{R} = r(\mathbf{s}, \mathbf{a}) &= U^{\mathrm{BS}}_\mathrm{PT} + \sum_{k = 1}^K U_k^A(\mathcal{Q}_k,R_k) \\ 
			& + \sum_{k = 1}^K\sum_{n = 1}^K (U_k^A(\mathcal{Q}_k,R_k) - U_k^A(\mathcal{Q}_n,R_n)),
		\end{aligned}
	\end{equation}
	otherwise, $r(\mathbf{s}, \mathbf{a})$ is set to $0$.
	\subsection{Algorithm Architecture}
	\subsubsection{Contract generation policy}
	As illustrated in Fig.~\ref{framework}, the contract generation policy specifies how the BS determines an appropriate contract design under a given environmental state. In this work, the policy is parameterized by a conditional diffusion model \cite{10834569}, which maps the observed state $\mathbf{s}$ to an optimal contract action that maximizes the expected cumulative reward over a series of time steps. Specifically, we denote the diffusion-based contract generation policy as $\pi_\omega(\Phi \mid \mathbf{s})$, where $\Phi=(\mathcal{Q}, R)$ denotes the contract pair and $\omega$ denotes the policy parameters.
	
	The proposed policy generates near-optimal or optimal contracts through the reverse denoising process of a conditional diffusion probabilistic model \cite{ding2024diffusion}. Starting from a Gaussian noise sample $\Phi^T \sim \mathcal{N}(\mathbf{0}, \mathbf{I})$, the policy progressively refines the noisy contract variable over $T$ diffusion steps conditioned on the current environment state $\mathbf{s}$ \cite{wang2022diffusion}. The overall policy distribution is given by \cite{du2024diffusion}
	\begin{equation}
		\begin{aligned}
			\pi_\omega(\Phi \mid \mathbf{s}) &= p_\omega(\Phi^0, \ldots, \Phi^T \mid \mathbf{s}) \\
			&= \mathcal{N}(\Phi^T;\mathbf{0},\mathbf{I}) \prod_{t=1}^{T} p_\omega(\Phi^{t-1}\mid \Phi^{t}, \mathbf{s}),
		\end{aligned}
	\end{equation}
	and the outcome of the reverse chain process represents the selected contract design. Each reverse transition is modeled as a Gaussian distribution, given by \cite{ding2024diffusion}
	\begin{equation}\label{phi0}
		p_\omega(\Phi^{t-1}\mid \Phi^{t}, \mathbf{s}) = \mathcal{N} \big(\Phi^{t-1}; \boldsymbol{\mu}_\omega(\Phi^{t},\mathbf{s},t), \boldsymbol{\Sigma}_\omega(\Phi^{t},\mathbf{s},t) \big), 
	\end{equation}
	where the covariance matrix is fixed as $\boldsymbol{\Sigma}_\omega(\Phi^{t},\mathbf{s},t) = \delta_t \mathbf{I}$. The mean $\boldsymbol{\mu}_\omega(\Phi^{t},\mathbf{s},t)$ is expressed as \cite{10638123}
	\begin{equation} \label{mean}
		\boldsymbol{\mu}_\omega(\Phi^{t},\mathbf{s},t)=\frac{1}{\sqrt{\chi_t}}\left(\Phi^{t}-\frac{\delta_t}{\sqrt{1-\bar{\chi}_t}}\boldsymbol{\epsilon}_\omega(\Phi^{t},\mathbf{s},t)\right),
	\end{equation}
	where $\chi_t = 1 - \delta_t$ and $\bar{\chi}_t = \prod_{i=1}^{t} \chi_i$. Here, $\boldsymbol{\epsilon}_\omega(\cdot)$ represents the contract generation network that predicts the denoising direction conditioned on the state $\mathbf{s}$ and the diffusion step $t$. The output of the reverse diffusion chain, i.e. $\Phi^0$, corresponds to the final contract design executed in the environment.
	\subsubsection{Contract quality network}
	The effective train of the contract design policy $\pi_\omega$ in dynamic and high-dimensional environments $\mathbf{s}$ can facilitate the training of the contract generation network $\epsilon_\omega$. To evaluate the long-term effectiveness of a given contract design under stochastic environments, we introduce a contract quality network, denoted as $Q_\phi(\mathbf{s},\Phi)$, which assigns a scalar value to each state–contract pair \cite{du2024diffusion}. This value represents the expected cumulative reward obtained by executing the contract $\Phi$ from state $\mathbf{s}$ and thereafter following the current contract generation policy \cite{wang2022diffusion}.
	
	The contract quality network plays a role analogous to the Q-function in DRL. Formally, under the maximum-entropy RL paradigm, the optimal policy is given by \cite{ding2024diffusion} 
	\begin{equation} \label{policy_update}
		\pi^\star=\arg\max_{\pi}\mathbb{E}\left[\sum_{z=0}^{\infty}\gamma^z\big(r(\mathbf{s}_z,\Phi_z)-\varsigma \log \pi(\Phi_z\mid \mathbf{s}_z)\big)\right],
	\end{equation}
	where $r(\mathbf{s}_z, \Phi_z)$ denotes the immediate reward determined by the PT utility of the BS and the contract feasibility constraints, $\gamma$ is the discount factor, and $\varsigma$ controls the strength of entropy regularization.
	
	To improve training stability, we adopt double Q-learning and maintain two contract quality networks $Q_{\Phi_1}$ and $Q_{\Phi_2}$.  The critic parameters are optimized by minimizing the Bellman operator, given by \cite{wang2022diffusion}
	\begin{equation} \label{Q_update}
		\begin{split}
			\mathbb{E}_{(\mathbf{s},\Phi,r,\mathbf{s}')\sim\mathcal{D}} \Bigg[ \sum_{m=1}^{2} \big( & r+\gamma (1-d)\bar{Q}_{\bar{\phi}}(\mathbf{s}',\Phi') \\
			& -Q_{\phi_m}(\mathbf{s},\Phi)\big)^2 \Bigg],
		\end{split}
	\end{equation}
	where $\mathcal{D}$ denotes the experience replay buffer, $d$ is the termination indicator, $\Phi' \sim \pi_\omega(\cdot \mid \mathbf{s}')$, and $\bar{Q}_{\bar{\phi}}$ is the target contract quality network \cite{du2024diffusion}. 
	\begin{figure*}[t] 
		\vspace{-2.8em}
		\centering
		\captionsetup{font=footnotesize}
		\includegraphics[width=0.8\textwidth]{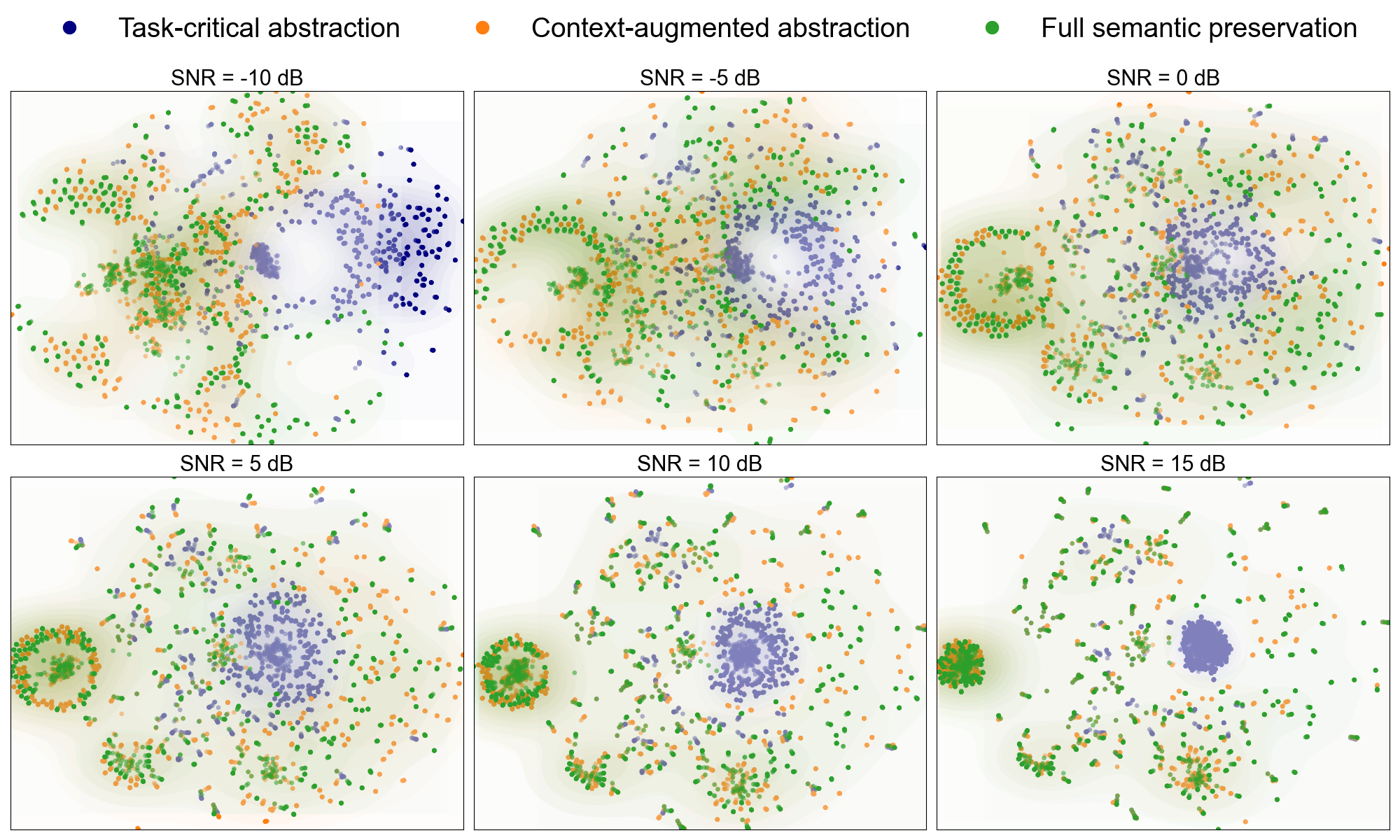}
		\caption{Evolution of CLIP feature distributions for hierarchical semantic representations under varying SNR levels.} 
		\label{fig:entropy}
	\end{figure*}
	\subsubsection{Policy improvement module}
	\begin{algorithm}[ht] 
		\DontPrintSemicolon
		\SetAlgoLined
		\caption{RDSAC Algorithm for Optimal Contract Design under PT}\label{AI_Contract}
		
		\KwIn{Diffusion step $T$, batch size $N$, discount factor $\gamma$, soft target update parameter $\tau$.}
		\KwOut{The optimal contract design $\Phi^0$.}
		Initialize replay buffer $\mathcal{D}$, contract generation network $\boldsymbol{\epsilon}_\omega$ with weights $\omega$, target contract generation network $\boldsymbol{\epsilon}'_{\omega'}$ with weights $\omega'$, contract quality network $Q_\phi$ with weights $\phi$, target contract quality network $Q'_{\phi'}$ with weights $\phi'$.\\
		
		\For{\rm{Episode} $e=1$ \rm{to} $E_\mathrm{max}$}
		{
			Initialize a random process $\mathcal{N}$ for contract design exploration. \\
			\For{\rm{Step} $n=1$ \rm{to} $N$}
			{
				Observe state $\mathbf{s}$ and randomly initialize a normal sample $\Phi^T_n \sim \mathcal{N}(\mathbf{0,I})$. \\
				\For{ $t=1$ \rm{to} $T$}
				{
					Construct a denoising network $\boldsymbol{\epsilon}_\omega(\Phi^t_n, \mathbf{s}, t)$.\\
					Calculate the mean by using (\ref{mean}).\\
					Generate the contract $\Phi^0_n$ based on (\ref{phi0}).\\
				}
				Execute the generated contract $\Phi_n^0$ and observe the reward $r_n$. \\
				Store record $(\boldsymbol{s}_n,\Phi_n^0, r_n, \boldsymbol{s}_{n+1})$ into the replay buffer $\mathcal{D}$. \\
			}
			
			Sample a random mini-batch of transitions $\mathcal{O}$ with size $O$ from $\mathcal{D}$. \\ 
			Update the contract quality network by minimizing the Bellman operator (\ref{Q_update}). \\
			Update the contract generation network by computing the policy gradient (\ref{policy_update}).\\
			Update the target networks:
			$\omega'\leftarrow\tau\omega+(1-\tau)\omega'$,
			$\phi'\leftarrow\tau \phi+(1-\tau)\phi'$. \\
		}
		\textbf{return} The trained contract generation network $\boldsymbol{\epsilon}_\omega$. \\
	\end{algorithm}
	Directly optimizing the diffusion-based contract generation policy under the above objective is nontrivial in large-scale multi-UAV systems. In particular, obtaining expert demonstrations with consistently high-quality state–action pairs and reliable $Q$ values is typically impractical \cite{wen2025hybridrag}. Without sufficient expert guidance, the policy may suffer from biased action exploration, leading to premature convergence to suboptimal solutions \cite{ding2024diffusion}.
	
	To address the above challenge, we reformulate the policy learning objective function by incorporating action entropy regularization and diffusion entropy regularization. Following the policy gradient formulation of maximum-entropy RL, maximizing the expected Q-value is equivalent to minimizing a weighted log-likelihood objective \cite{du2024diffusion}. Thus, the action entropy regularization loss is expressed as \cite{wen2025hybridrag}
	\begin{equation} \label{acten}
		\begin{split}
			\mathcal{L}_{\mathrm{act}}(\omega) = -\mathbb{E}_{\mathbf{s}\sim\mathcal{D},\,\Phi\sim\pi_\omega} \big[ & Q_\phi(\mathbf{s},\Phi)\log \pi_\omega(\Phi\mid \mathbf{s}) \\
			& + \beta H(\pi_\omega(\mathbf{s})) \big],
		\end{split}
	\end{equation}
	where $Q_\phi(\mathbf{s},\Phi)=\min\{Q_{\phi_1}, Q_{\phi_2}\}$ and ${H}(\pi_\omega(\mathbf{s}))$ denotes the policy entropy \cite{du2024diffusion}. The action entropy regularization encourages the policy to maintain a more uniform action distribution, thereby enhancing exploration during training. Thus, the action entropy regularization in (\ref{acten}) effectively mitigates premature convergence to suboptimal solutions \cite{du2024diffusion}.
	
	Since the diffusion-based policy does not admit an explicit closed-form density, directly computing $\log \pi_\omega(\Phi | \mathbf{s})$ is intractable \cite{du2024diffusion}. To this end, we introduce a diffusion policy regularization term inspired by denoising diffusion probabilistic models \cite{wang2022diffusion}. This regularization minimizes the discrepancy between the true noise and the predicted noise during the forward diffusion process, which is expressed as \cite{wang2022diffusion}
	\begin{equation}
		\mathcal{L}_{\mathrm{diff}}(\omega)=\mathbb{E}\left[\left\|\boldsymbol{\epsilon}-\boldsymbol{\epsilon}_\omega\big(\sqrt{\bar{\chi}_t}\Phi+\sqrt{1-\bar{\chi}_t}\boldsymbol{\epsilon},\mathbf{s},t\big)\right\|^2\right],
	\end{equation}
	where $\boldsymbol{\epsilon}\sim\mathcal{N}(\mathbf{0},\mathbf{I})$ and $\sqrt{\bar{\chi}_t}\Phi+\sqrt{1-\bar{\chi}_t}$ represents the expert action after the reverse denoising process. This term can be interpreted as a behavior-cloning loss.
	
	Therefore, the contract generation policy is optimized by minimizing a weighted combination of the two losses, which is given by \cite{wang2022diffusion}
	\begin{equation}
		\mathcal{L}(\omega)=\rho\,\mathcal{L}_{\mathrm{act}}(\omega)+(1-\rho)\,\mathcal{L}_{\mathrm{diff}}(\omega),
	\end{equation}
	where $\rho\in[0, 1]$ balances value-driven optimization and diffusion-based regularization.
	\subsubsection{Complexity analysis}
	The proposed RDSAC algorithm involves utilizing denoising techniques to generate the optimal contract under PT \cite{10638123}, as illustrated in Algorithm~\ref{AI_Contract}, which facilitates the derivation of an optimal contract that maximizes the overall subjective utility of the BS under PT. Algorithm~\ref{AI_Contract} consists of four phases. In the following, we analyze its computational complexity.
	
	In the algorithm initialization stage, the algorithm incurs a computational complexity of $\mathcal{O}(4|\omega| + 2|\phi|)$. The action sampling stage requires $\mathcal{O}(E_\mathrm{max}NT|\omega|)$ operations \cite{du2024diffusion}. Let $V$ denote the interaction cost between the DRL agent and the environment; the experience collection stage therefore exhibits a complexity of $\mathcal{O}(E_\mathrm{max}NV)$ \cite{du2024diffusion}. During parameter updates, the overall complexity comprises three components: $\mathcal{O}(OE_\mathrm{max}|\omega|)$ for policy updates, $\mathcal{O}(OE_\mathrm{max}|\phi|)$ for critic updates, and $\mathcal{O}(OE_\mathrm{max}(|\omega| + |\phi|))$ for target network updates. Consequently, the total complexity of the parameter update stage is $\mathcal{O}((O + 1)E_\mathrm{max}(|\omega| + |\phi|))$ \cite{wen2025hybridrag}.
	
	Based on the above analysis, the computational complexity of the proposed algorithm is $\mathcal{O}(4|\omega| + 2|\phi| + E_\mathrm{max}N(T|\omega| + V) + E_\mathrm{max}(O + 1)(|\omega| + |\phi|))$.

	\section{Numerical Results} \label{six}
	In this section, we first introduce the experimental setup. We then evaluate the proposed semantic entropy control framework. Finally, we validate the effectiveness of the proposed RDSAC algorithm for optimal contract design under PT. 
	\subsection{Simulation Setup}
	We design the contract generation network $\boldsymbol{\epsilon}_\omega$ and the contract quality network $Q_\phi$ with the same structure to reduce the issue of overestimation\cite{10638123}. For the regulated diffusion-based contract generation model under PT, the learning rates of the contract generation network and the contract quality network are set to $2\times10^{-7}$ and $2\times10^{-6}$, respectively, while the remaining hyperparameters follow standard settings and are omitted for brevity. 
	In this paper, we consider $5$ UAVs, which are divided into $2$ types, i.e., $M = 5$ and $K = 2$. According to \cite{10515203} and \cite{10638123}, $\theta_1$ and $\theta_2$ are randomly sampled within $[10,50)$ and $[100,200)$, respectively. $\lambda_1$ and $\lambda_2$ are generated randomly, following the Dirichlet distribution \cite{10638123}. The unit resource cost of covert SemCom service provisions $a$ is randomly sampled within $[80,100)$. The loss aversion $\eta$ and loss distortion $\zeta^{+}, \zeta^{-}$ are set to $0.5$, $1$ and $1$, respectively. The weighting factors $\upsilon$ and $\beta$ are set to $200$ and $50$, respectively. Our experiments are conducted on a desktop computer equipped with an NVIDIA GeForce RTX 3080 GPU, using PyTorch and in combination with CUDA 12.6.
	\subsection{Performance Evaluation of the Semantic Entropy Regulation Module}
	\begin{table}[t] 
		\renewcommand{\arraystretch}{1.2}
		\setlength{\heavyrulewidth}{1.3pt} 
		\caption{Comparison of $\mathcal{Q}$ values and PER for different semantic abstraction levels (i.e., $g_1$, $g_2$, and $g_3$) under varying Signal-to-Noise Ratio (SNR) conditions.} 
		\centering \label{effQ}
		\begin{tabular}{ccccc} 
			\toprule
			\cellcolor{gray!5}SNR (dB)
			& \cellcolor{gray!5}$\mathbf{\mathcal{Q}_{g_1}}$
			& \cellcolor{gray!5}$\mathbf{\mathcal{Q}_{g_2}}$
			& \cellcolor{gray!5}$\mathbf{\mathcal{Q}_{g_3}}$
			& \cellcolor{gray!5}PER \\
			\midrule
			$-3$
			& \cellcolor{orange!5}\ensuremath{\mathbf{0.000073}}
			& \cellcolor{orange!5}\ensuremath{0.000027}
			& \cellcolor{orange!5}\ensuremath{0.000007}
			& \cellcolor{blue!5}\ensuremath{0.999992} \\
			
			$0$
			& \cellcolor{orange!5}\ensuremath{0.022468}
			& \cellcolor{orange!5}\ensuremath{0.009444}
			& \cellcolor{orange!5}\ensuremath{\mathbf{0.025215}}
			& \cellcolor{blue!5}\ensuremath{0.998047} \\
			
			$3$
			& \cellcolor{orange!5}\ensuremath{0.515429}
			& \cellcolor{orange!5}\ensuremath{0.271028}
			& \cellcolor{orange!5}\ensuremath{\mathbf{0.749682}}
			& \cellcolor{blue!5}\ensuremath{0.961121} \\
			
			$6$
			& \cellcolor{orange!5}\ensuremath{2.766209}
			& \cellcolor{orange!5}\ensuremath{1.942710}
			& \cellcolor{orange!5}\ensuremath{\mathbf{5.319875}}
			& \cellcolor{blue!5}\ensuremath{0.810911} \\
			
			$9$
			& \cellcolor{orange!5}\ensuremath{6.688143}
			& \cellcolor{orange!5}\ensuremath{6.208409}
			& \cellcolor{orange!5}\ensuremath{\mathbf{16.60004}}
			& \cellcolor{blue!5}\ensuremath{0.570755} \\
			
			$12$
			& \cellcolor{orange!5}\ensuremath{10.43494}
			& \cellcolor{orange!5}\ensuremath{12.68313}
			& \cellcolor{orange!5}\ensuremath{\mathbf{33.00704}}
			& \cellcolor{blue!5}\ensuremath{0.347441} \\
			
			$15$
			& \cellcolor{orange!5}\ensuremath{13.06226}
			& \cellcolor{orange!5}\ensuremath{20.51111}
			& \cellcolor{orange!5}\ensuremath{\mathbf{50.02533}}
			& \cellcolor{blue!5}\ensuremath{0.193230} \\
			
			$18$
			& \cellcolor{orange!5}\ensuremath{14.56031}
			& \cellcolor{orange!5}\ensuremath{28.5039}
			& \cellcolor{orange!5}\ensuremath{\mathbf{65.21053}}
			& \cellcolor{blue!5}\ensuremath{0.102205} \\
			
			$21$
			& \cellcolor{orange!5}\ensuremath{15.31169}
			& \cellcolor{orange!5}\ensuremath{36.02593}
			& \cellcolor{orange!5}\ensuremath{\mathbf{77.442}}
			& \cellcolor{blue!5}\ensuremath{0.052649} \\
			\bottomrule
		\end{tabular}
		\vspace{-0.4cm}
		\label{tab:effQ}
	\end{table}
	\begin{figure}[t]
		\centering
		\begin{minipage}[t]{0.45\linewidth}  
			\centering
			\includegraphics[width=\linewidth]{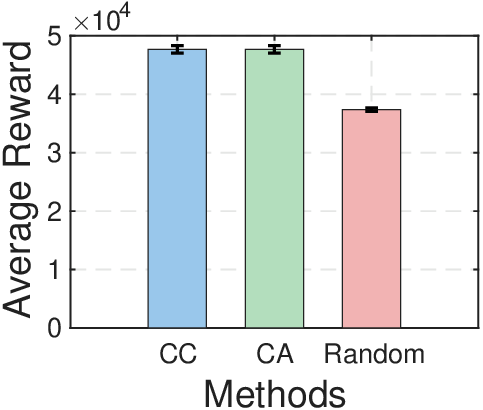}
			\captionsetup{font=footnotesize}
			\caption{Average reward comparison of our scheme with other schemes under PT,  where reference point $U_\mathrm{ref}=160$.}
			\label{compare}
		\end{minipage}
		\hspace{0.02\linewidth}  
		\begin{minipage}[t]{0.45\linewidth}
			\centering
			\includegraphics[width=\linewidth]{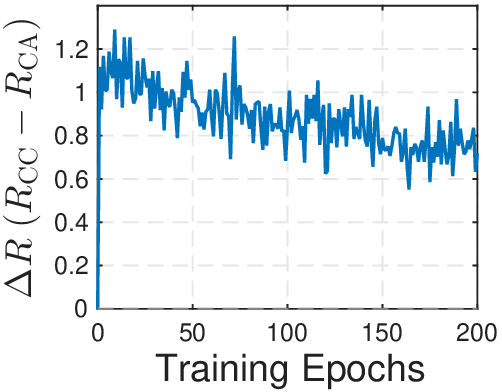}
			\captionsetup{font=footnotesize}
			\caption{Difference in average rewards between the complete-information contract and the asymmetric-information contract during training under PT, where the reference point $U_\mathrm{ref}=160$.}
			\label{chazhi}
		\end{minipage}
	\end{figure}
	\begin{figure}[t]
		\centering
		\captionsetup{font=footnotesize}
		\includegraphics[width=0.78\linewidth]{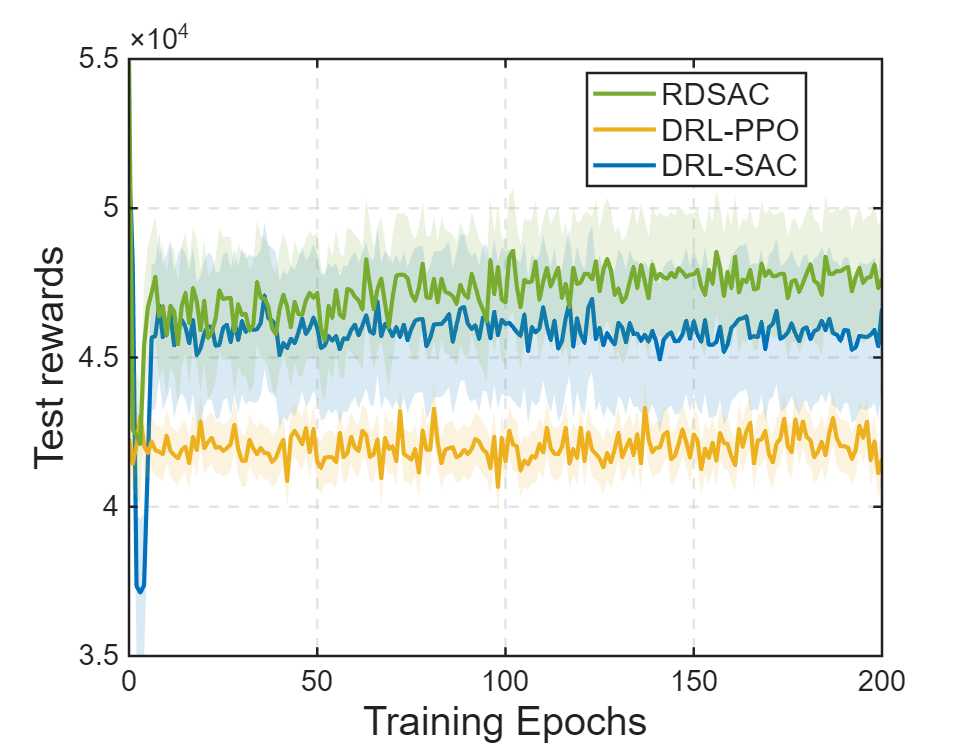}
		\caption{Performance comparison between the RDSAC algorithm and traditional DRL algorithms in optimal contract design under PT, where the reference point $U_\mathrm{ref}=160$.}
		\label{algo}
	\end{figure}
	\begin{figure}[t]
		\centering
		\captionsetup{font=footnotesize}
		\includegraphics[width=0.8\linewidth]{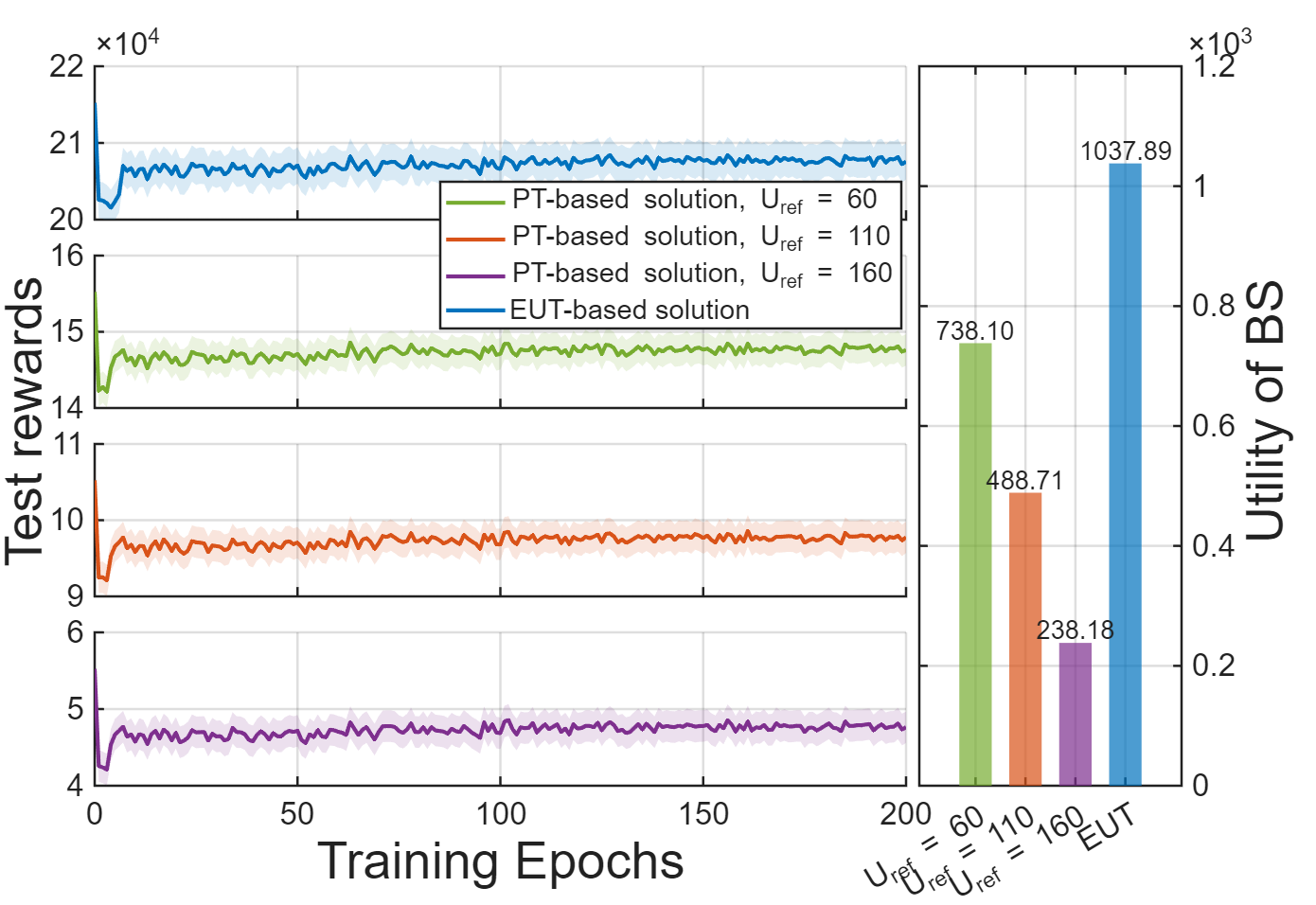}
		\caption{Test reward comparison of the proposed scheme under different reference points $U_\mathrm{ref}$.}
		\label{reference}
	\end{figure}
	In this subsection, we evaluate the effectiveness of the proposed semantic entropy regulation module by jointly examining the semantic representation characteristics and the associated task-level performance $\mathcal{Q}$ under varying channel conditions. For a given visual scene, three hierarchical semantic abstractions with different granularity levels are constructed, including task-critical abstraction ($g_1$), context-augmented abstraction  ($g_2$), and full semantic preservation ($g_3$). To ensure a fair comparison across different abstraction levels, all representations are encoded using the same visual encoder. Specifically, we adopt the CLIP visual encoder \cite{shen2021much} as a unified semantic embedding backbone, such that the resulting features lie in a common semantic space. Then, we project the high-dimensional semantic embeddings into a two-dimensional space using t-SNE \cite{10294259} to analyze their distribution characteristics under different semantic granularity levels. From the resulting low-dimensional projections, the semantic entropy under different abstraction levels can be qualitatively inferred from the structural properties of the feature distributions. In addition, to capture the impact of transmission impairments on semantic representations, the above analysis is conducted under multiple SNR settings. Beyond feature-space characterization, we also report the corresponding semantic efficiency values $\mathcal{Q}$ achieved by different abstraction levels under the same SNR conditions.
	
	Figure \ref{fig:entropy} illustrates the t-SNE visualizations of the semantic embedding distributions corresponding to three abstraction levels under different channel conditions. When only task-critical abstraction is preserved ($g_1$), the resulting semantic features exhibit a highly concentrated distribution, indicating a low degree of semantic uncertainty. In contrast, context-augmented abstraction ($g_2$) leads to a moderately more dispersed distribution, while full semantic preservation ($g_3$) produces the most scattered feature patterns, reflecting the highest semantic entropy. This hierarchical dispersion trend remains consistent across all considered SNR settings, ranging from $-10\:\mathrm{dB}$ to $15\:\mathrm{dB}$, although the overall spread of the feature distributions increases as the channel condition degrades. These observations confirm that the proposed semantic abstraction mechanism enables effective regulation of semantic entropy, with finer-grained representations exhibiting inherently higher uncertainty, especially under noisy transmission conditions.
	
	Beyond the qualitative feature-space analysis, Table~\ref{tab:effQ} further reports the corresponding task-level performance $\mathcal{Q}$ achieved by different abstraction levels and the corresponding PER under various SNRs. As the SNR increases, the PER consistently decreases, indicating progressively more reliable semantic transmission. Correspondingly, $\mathcal{Q}$ increases for all abstraction levels, reflecting improved task execution performance under favorable channel conditions. Although $\mathcal{Q}$ is not a direct measure of semantic entropy, its variation provides complementary insights into how semantic uncertainty propagates to downstream decision-making. Specifically, the task-critical abstraction $g_1$ exhibits the smallest variation range of $\mathcal{Q}$ across all SNRs, suggesting a more stable and robust value estimation behavior under channel fluctuations. In contrast, the context-augmented abstraction $g_2$ shows a moderate increase in both magnitude and variability, while the full semantic preservation $g_3$ achieves the highest $\mathcal{Q}$ values at high SNRs but suffers from significantly larger fluctuations under low and moderate SNRs. This observation indicates that higher-entropy semantic representations, while offering greater performance potential in favorable conditions, are inherently more sensitive to transmission impairments. Combined with the feature-space distributions in Fig.~\ref{fig:entropy}, the results in Table~\ref{tab:effQ} demonstrate that regulating semantic entropy enables a controllable tradeoff between robustness and performance, thereby validating the necessity of semantic entropy regulation under constrained and unreliable communication environments.
	
	\subsection{Performance Evaluation of the RDSAC Under PT}
	\begin{figure*}[t]
		\centering
		\vspace{-3.5em}
		\captionsetup{font=footnotesize}
		\subfigure[Environmental state 1]{
			\includegraphics[width=0.2\textwidth]{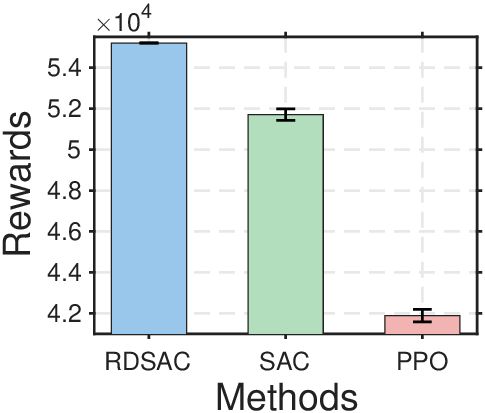}
		}
		\subfigure[Environmental state 2]{
			\includegraphics[width=0.2\textwidth]{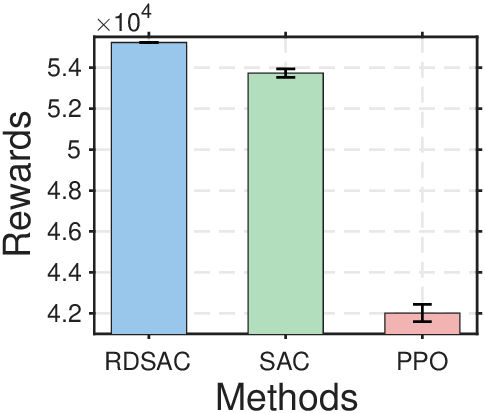}
		}
		\subfigure[Environmental state 3]{
			\includegraphics[width=0.2\textwidth]{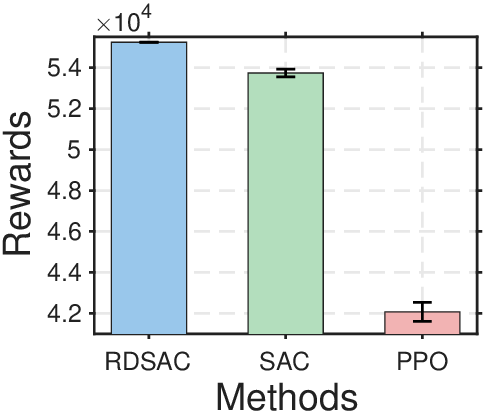}
		}
		\subfigure[Environmental state 4]{
			\includegraphics[width=0.2\textwidth]{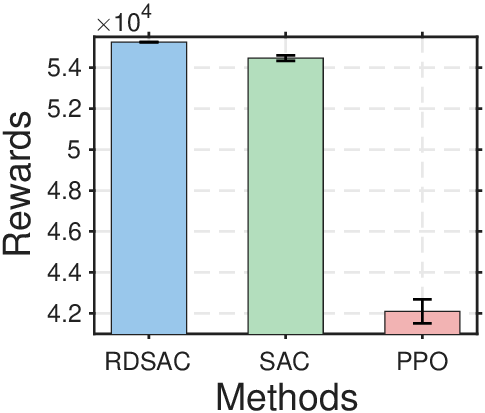}
		}
		\caption{Average testing rewards and standard deviations of different algorithms under diverse environmental settings.}
		\label{fig:four_subfig}
	\end{figure*}
	In this subsection, we evaluate the proposed RDSAC algorithm from multiple perspectives, including information asymmetry, algorithmic comparison, prospect-theoretic effects, and environmental robustness. All results are reported in terms of the average test reward and its standard deviation to reflect both performance and stability.
	
	To evaluate the performance of the proposed scheme, we compare the proposed \underline{C}ontract-based incentive mechanism with \underline{A}symmetric information (CA) with other schemes under PT: 1) \textit{\underline{C}ontract-based incentive mechanism with \underline{C}omplete information (CC)} \cite{11341901} that the BS knows the private information of UAVs (i.e., UAV types); 2) \textit{Random policy under asymmetric information} that the BS randomly designs contracts without considering the types of UAVs. Figure~\ref{compare} shows the converged average rewards of different schemes. Both CA and CC significantly outperform the random policy. 
	This result demonstrates the necessity of a contract-based incentive mechanism under asymmetric information. 
	In contrast, random contract generation fails to effectively align the incentives between the BS and UAVs. This leads to much worse performance. Moreover, as shown in Fig.~\ref{chazhi}, the performance gap between CA and CC is marginal. Under identical system settings, CA achieves an average reward after convergence that is only slightly lower than that of CC. This finding indicates that the proposed contract generation model under asymmetric information can closely approach the performance upper bound of the complete-information benchmark. The limited gap is due to the fact that complete information allows the BS to select the exact optimal contract for each UAV. Under asymmetric information, contract design relies on incentive-compatible self-selection, which introduces only minor inefficiency. Although the complete-information scenario yields slightly higher utility, it is impractical in realistic UAV-assisted networks. In practice, UAV information is private and may be misreported \cite{10254627}. Overall, the proposed scheme achieves near-optimal performance under asymmetric information. This makes it more practical and reliable than the idealized complete-information benchmark.
	
	In Fig.~\ref{algo}, we evaluate the performance of optimal contract design under PT. We compare RDSAC with two representative DRL algorithms, namely SAC and PPO. It is observed that RDSAC consistently achieves the highest test rewards among all algorithms. This performance gain is mainly attributed to the diffusion-based strategy generation process, which effectively mitigates the impact of noise and stochasticity during policy learning. In addition, RDSAC incorporates diffusion entropy regularization to provide stable imitation learning signals, thereby preventing policy collapse, as well as action entropy regularization to encourage broader exploration and avoid convergence to suboptimal local optima \cite{wen2025hybridrag}. Overall, these mechanisms jointly enhance learning stability and exploration efficiency, demonstrating the effectiveness of the proposed RDSAC algorithm.
	
	Figure~\ref{reference} illustrates the impact of different reference points on the proposed contract scheme under PT, together with the EUT-based solution. The PT-based solutions are evaluated under three reference points, i.e., $U_{\mathrm{ref}} \in \{60, 110, 160\}$, while the EUT-based solution corresponds to the proposed contract model without PT. It is observed that the EUT-based solution consistently achieves higher testing rewards during training and attains the highest BS utility after convergence, since it assumes fully rational UAVs and ignores behavioral biases under uncertainty \cite{10638123}. In contrast, the performance of PT-based solutions is strongly influenced by the reference point. A lower reference point leads to better training performance and higher BS utility, which decreases from $738.1$ to $488.71$ and $238.18$ as $U_{\mathrm{ref}}$ increases from $60$ to $110$ and $160$, respectively. This trend indicates that higher reference points amplify perceived losses of the BS under PT, thereby reducing their participation incentives and degrading the BS utility. 
	
	In Fig.~\ref{fig:four_subfig}, we assess the performance of the pretrained RDSAC, SAC, and PPO algorithms under varying environmental states with different UAV types $\theta$ and SNR levels. For each environment, the algorithms are evaluated over $50$ episodes. Under these settings, as shown in Fig.~\ref{fig:four_subfig}, RDSAC consistently achieves the highest average reward across all testing scenarios, with only negligible performance variation. In particular, the reward standard deviation of RDSAC remains extremely small under all parameter settings, indicating that the learned policy converges to a highly stable and near-deterministic strategy after training. In contrast, SAC attains lower average rewards with significantly larger variances, while PPO exhibits both inferior performance and pronounced reward fluctuations, reflecting limited stability and weaker generalization capability. Notably, as the UAV type $\theta$ and SNR increase, the superiority of RDSAC remains consistent, demonstrating strong robustness to variations in environmental configurations. These results validate that the proposed RDSAC algorithm not only improves the achievable system utility but also provides substantially enhanced stability and reliability compared with conventional DRL-based solutions.
	
	\section{Conclusion} \label{seven}
	In this paper, we have studied covert SemCom in LAENets, where severely constrained channel capacity conflicts with task-level semantic requirements. To this end, we have proposed a task-oriented semantic entropy regulation framework that controls the abstraction level of semantic representations, enabling reliable transmission of task-critical information under stringent concealment constraints. To handle information asymmetry between the BS and UAVs, we have proposed an incentive-compatible contract-based model in which semantic entropy acts as a verifiable decision variable, and we have incorporated PT to capture risk-sensitive UAV behaviors. To solve the resulting high-dimensional and non-convex optimization problem, we have designed the RDSAC algorithm that adopts a diffusion-based policy representation with entropy regularization to enhance learning stability and robustness. Finally, numerical results demonstrate the effectiveness and reliability of the proposed scheme. For future work, we will investigate incentive mechanism design integrated with multi-agent diffusion-based DRL, and develop theoretical characterizations on convergence, equilibrium approximation, and efficiency guarantees under behavioral decision models in covert SemCom systems.
	
	\section*{Acknowledgment}
	The authors would like to thank Prof. Dusit Niyato for his valuable guidance, insightful comments, and assistance in refining the manuscript, which greatly contributed to this work.

	\bibliographystyle{IEEEtran}
	\bibliography{ref}
	
\end{document}